\newtheorem{theorem}{Theorem}{\bfseries}{\normalfont}
{\bfseries}{\normalfont}
\newtheorem{proposition}{Proposition}{\bfseries}{\normalfont}
\newtheorem{rrule}{Rule}{\bfseries}{\normalfont}
\newtheorem{corollary}{Corollary}{\bfseries}{\normalfont}
{\bfseries}{\normalfont}
\newtheorem{lemma}{Lemma}{\bfseries}{\normalfont}
\def \eps {\varepsilon}
\DeclareMathOperator{\dist}{dist}
\DeclareMathOperator{\poly}{poly}
\DeclareMathOperator{\opt}{opt}
\DeclareMathOperator{\val}{val}
\def\NAT@spacechar{~}%
\theoremstyle{remark}
\newtheorem{remark}{Remark}
\crefname{rrule}{Rule}{Rules}
\crefname{line}{Line}{Lines}
\crefname{equation}{Equation}{Equations}
\crefname{theorem}{Theorem}{Theorems}
\crefname{lemma}{Lemma}{Lemmas}
\crefname{section}{Section}{Sections}
\crefname{table}{Table}{Tables}
\Crefname{obs}{Observation}{Observations}
\crefname{obs}{Observation}{Observations}
\crefname{figure}{Figure}{Figures}
\crefname{corollary}{Corollary}{Corollary}
\crefname{proposition}{Proposition}{Propositions}
\DeclareRobustCommand{\NoKernelAssume}{$\text{NP}\subseteq \text{{coNP/poly}}$}
\title{A More Fine-Grained Complexity Analysis of Finding the Most Vital Edges for Undirected Shortest Paths}
\author[1]{Cristina Bazgan\thanks{Institut Universitaire de France}$^,$}
\author[2]{Till~Fluschnik\thanks{Supported by DFG research project DAMM, NI~369/13.}$^,$}
\author[2]{Andr\'e Nichterlein}
\author[2]{Rolf Niedermeier}
\author[2]{Maximilian~Stahlberg}
\affil[1]{%
\small{
Universit\'e Paris-Dauphine, PSL Research University, CNRS, UMR 7243, LAMSADE, 75016 Paris, France\\
\texttt{bazgan@lamsade.dauphine.fr}
}}
\affil[2]{
\small{
Institut f\"ur Softwaretechnik und Theoretische Informatik, TU Berlin, Berlin, Germany\\
\texttt{\{till.fluschnik,andre.nichterlein,rolf.niedermeier,maximilian.stahlberg\}@tu-berlin.de}
}}
\newcounter{Bew1}
\newcounter{Bew2}
\newcommand{\MVE}{\textsc{Shortest Path Most Vital Edges}\xspace}
\newcommand{\SPMVE}{{SP-MVE}\xspace}
\newcommand{\kmve}{\textsc{Max-Length SP-MVE}\xspace}
\newcommand{\meb}{\textsc{Min-Cost SP-MVE}\xspace}
\newcommand{\VC}{\textsc{Vertex Cover}\xspace}
\newcommand{\N}{\mathds{N}}
\newcommand{\I}{I}
\newcommand{\R}{\mathbb{R}}
\newcommand{\tw}{tw}
\crefname{enumi}{Step}{Steps}
\newcommand{\problemdef}[3]{
    \begin{center}
    \begin{minipage}{0.95\textwidth}
      \noindent
      \normalsize\textsc{#1}
      
      \vspace{1pt}
      \setlength{\tabcolsep}{3pt}
      \renewcommand{\arraystretch}{1.0}
      \begin{tabularx}{\textwidth}{@{}lX@{}}
	\normalsize\textbf{Input:} 	& \normalsize#2 \\
	\normalsize\textbf{Question:} 	& \normalsize#3
      \end{tabularx}
    \end{minipage}
    \end{center}
}
\date{}
\begin{document}

\maketitle
\begin{abstract}
	We study the NP-hard \MVE problem arising in the context of analyzing network robustness.
	For an undirected graph with positive integer edge lengths and two designated vertices~$s$ and~$t$, the goal is to delete as few edges as possible in order to increase the length of the (new) shortest $st$-path as much as possible.
	This scenario has been studied from the viewpoint of parameterized complexity and approximation algorithms.
	We contribute to this line of research by providing refined computational tractability as well as hardness results. 
	We achieve this by a systematic investigation of various problem-specific parameters and their influence on the computational complexity. 
	Charting the border between tractability and intractability, we also identify numerous challenges for future research.

\end{abstract}

\section{Introduction}

Finding shortest paths in graphs is arguably among the most fundamental graph
problems.
We study the case of undirected graphs with positive integer edge lengths within the framework of ``most vital edges'' or (equivalently) ``interdiction'' or ``edge blocker'' problems.
That is, we are interested in the scenario where the goal is to delete (few) edges such that in the resulting graph the shortest $st$-path gets (much) longer.
This is motivated by applications in investigating robustness and critical infrastructure in the context of network design.
Our results provide new insights with respect to classical, parameterized, and approximation complexity of this fundamental edge deletion problem which is known to be NP-hard~\cite{BKS95,KBBEGRZ08}. In its decision version, the problem 
reads as follows.

        \problemdef
                {\MVE (SP-MVE)}
                {An undirected graph~$G = (V,E)$ with positive edge lengths $\tau\colon E \rightarrow \N$, two vertices~$s,t \in V$, and integers~$k,\ell \in \N$.}
                {Is there an edge subset~$S \subseteq E$, $|S| \le k$, such that the length of a shortest $st$-path in $G-S$ is at least~$\ell$?}
We set~$b := \ell - \dist_G(s,t)$ to be the number by which the length of every shortest $st$-path shall be increased.
If all edges have length one, then we say that the graph has unit-length edges.
Naturally, SP-MVE comes along with two optimization versions: Either delete as few edges as possible in order to achieve a length increase of at least~$b$ (called \meb) or obtain a maximum length increase under the constraint that $k$~edges can be deleted (called \kmve).
For an instance of SP-MVE or \kmve we assume that $k$ is smaller than the size of any~$st$-edge-cut in the input graph. %
Otherwise, removing all edges of a minimum-cardinality $st$-edge-cut (which is polynomial-time computable) would lead to a solution disconnecting $s$ and~$t$.

\paragraph*{Related work.}
Due to the immediate practical relevance, e.g. in supply~\cite{FH77,golden1978problem} and communication~\cite{ItaiPS82} networks, there are numerous studies concerning ``most vital edges (and vertices)'' and related problems.
We focus on shortest paths, but there are further studies for problems such as \textsc{Minimum Spanning Tree}~\cite{BTV12,BTV13b,FS96,GS14,Lia01} or \textsc{Maximum Flow}~\cite{GS14,PS16,Woo93}, to mention only two.
With respect to shortest path computation, the following is known.

First, we mention in passing that a general
result of \citet{FH77} implies that allowing
the subdivision of edges instead of edge deletions as modification
operation makes the problem polynomial-time solvable.
Notably, it also has been studied to find \emph{one} most
vital edge of a shortest path; this can be solved in almost linear
time~\cite{NPW01}.

\citet{BKS95} showed that SP-MVE is NP-complete.
\citet{KBBEGRZ08} found polynomial-time constant-factor inapproximability results for both optimization versions.
For the case of directed graphs, \citet{IW02} provided heuristic solutions based on mixed-integer programming together with experimental results.
\citet{PS16} studied the restriction of the directed case to
planar graphs and again obtained NP-hardness results.

\citet{BaierEHKKPSS10} studied a minimization variant of SP-MVE 
where edges, in addition to a length value, also have a deletion cost associated with them. They 
refer to this problem as
\textsc{Minimum Length-Bounded Cut (MLBC)} and
showed that it is NP-hard to approximate within a factor of 1.1377 for~$\ell\geq 5$.
Moreover, they developed a polynomial-time algorithm for the 
special case of~$b=1$.
Further, they showed that MLBC with general edge-costs and edge-lengths remains NP-hard on series-parallel and outerplanar graphs.

\citet{GolovachT11} studied SP-MVE with unit-length edges under the name \textsc{Bounded Edge Undirected Cut (BEUC)} from a parameterized complexity point of view.
They showed that SP-MVE with unit-length edges is W[1]-hard with respect to~$k$ and that it is fixed-parameter tractable with respect to the combined parameter~$(k,\ell)$.
Answering an open question of \citet{GolovachT11}, \citet{FHNN18} showed that SP-MVE with unit-length edges does not admit a polynomial-size problem kernel with respect to~$(k,\ell)$, unless \NoKernelAssume.
Moreover, the latter showed that SP-MVE remains NP-hard on planar graphs.
\citet{DvorakK15} also studied SP-MVE with unit-length edges.
They showed that the problem is W[1]-hard with respect to pathwidth.
On the positive side, they showed that the problem is fixed-parameter tractable with respect to the treedepth of the input graph and with respect to~$\ell$ and the treewidth~$\tw$ of the input graph combined. 
Upon the latter, they proved that SP-MVE does not admit a polynomial-size problem kernel with respect to~$(\ell,\tw)$, unless \NoKernelAssume.
Moreover, they developed an algorithm running in~$n^{O(\tw^2)}$ time, 
that is, they showed that the problem lies in the complexity class~XP when parameterized by~$\tw$.
\citet{Kolman18} studied SP-MVE and its vertex deletion variant.
He proved that both variants on planar graphs are fixed-parameter tractable when parameterized by~$\ell$.
Additionally, for the vertex-deletion variant, he developed an~$O(\tw\cdot\sqrt{\log\tw})$-approximation algorithm, which improves to a~$\tw$-approximation algorithm when the tree decomposition is given.

\paragraph{Our results.}
We perform an extensive study of multivariate
complexity aspects~\cite{FJR13,Nie10} of SP-MVE. More specifically,
we perform a refined complexity analysis in terms of
how certain problem-specific parameters influence the computational
complexity of SP-MVE and its optimization variants.
The parameters we study include aspects of graph structure as well as
special restrictions on the problem parameters.
We also report a few findings on (parameterized) approximability.
Let us feature three main conclusions from our work:
First,  it is known that harming the network only a little bit (that is, $b=1$) is doable in polynomial time~\cite{BaierEHKKPSS10} while
we show that harming the network slightly more (that is, $b\geq 2$) 
becomes NP-hard. 
Second, the ``cluster vertex deletion number'', advocated by \citet{DK12} as a parameterization between vertex cover number and clique-width, currently is our most interesting parameter that yields fixed-parameter tractability for \SPMVE with unit-length edges.
Third, with general edge-lengths \SPMVE remains NP-hard even on complete graphs.
\cref{fig:par-hier} surveys our current understanding of the parameterized complexity of SP-MVE with respect to a number of well-known graph parameters, identifying numerous open questions.
\begin{figure}[t!]
	\centering
	\newcommand{\tworowsSec}[2]{#1 #2}
	\newcommand{\disttoSec}[1]{distance~to #1}
	\tikzstyle{boxes}=[draw,thick, rounded corners=3mm,text width=2.6cm,align=center,text opacity=1,fill opacity=1,fill=white]
	\tikzstyle{unk}=[fill=gray!15!white]
	\def\eps{0.08}
	\def\diff{0.36}
	\newcommand*{\ExtractCoordinate}[1]{\path (#1); \pgfgetlastxy{\XCoord}{\YCoord};}%
	\resizebox{\textwidth}{!}{
	\begin{tikzpicture}
	\matrix (first) [ampersand replacement=\&,row sep=0.7cm,column sep=0.2cm]
	{
		\node[boxes,minimum height=0.6cm] (dc) {distance to clique};
		\& \node[boxes,minimum height=0.6cm] (ce) {cluster editing};
		\& \node[boxes,minimum height=0.6cm] (vc) {vertex cover};
		\& \node  (FPTNoPolyKernel) {\Large\textbf{FPT}};
		\& \node[boxes,minimum height=0.6cm] (ml) {max leaf number};
		\&
		\\
			
		\node[boxes,minimum height=0.6cm,unk] (dcc)  {\disttoSec{co-cluster}} ;
		\& \node[boxes,minimum height=0.6cm] (dcl)  {cluster vertex deletion~(Thm.~\ref{thm:fptclusterdel})} ;
		\& \node[boxes,minimum height=0.6cm,unk] (ddp)  {\disttoSec{disjoint paths}} ;
		\& \node[boxes,minimum height=0.6cm] (td)  {tree-depth~\cite{DvorakK15}};
		\& \node[boxes,minimum height=0.6cm] (fes)  {\tworowsSec{feedback}{edge set} (Thm.~\ref{thm:fes-linear-kernel})};
		\& \node[boxes,minimum height=0.6cm] (bw)  {bandwidth} ;\\
			\node[boxes,minimum height=0.6cm,unk] (mcc)  {minimum clique cover};
		\& \node[boxes,minimum height=0.6cm,unk] (dcg)  {\disttoSec{cograph}} ;
		\& \node[boxes,minimum height=0.6cm,unk] (dig)  {\disttoSec{interval}} ;
		\& \node[boxes,minimum height=0.6cm,unk] (fvs)  {\tworowsSec{feedback vertex}{set}} ;
		\& \node[boxes,minimum height=0.6cm] (pw)  {pathwidth} ;
		\& \node[boxes,minimum height=0.6cm] (mxd) {\tworowsSec{maximum}{degree} (Prop.~\ref{prop:xp-delta})};\\
		   \node[boxes,minimum height=0.6cm,unk] (is)  {\tworowsSec{maximum}{independent set}};
		\&
		\& \node[boxes,minimum height=0.6cm,unk] (dch)  {\disttoSec{chordal}};
		\& \node[boxes,minimum height=0.6cm] (dbp)  {\disttoSec{bipartite} (Thm.~\ref{thm:bipartiteDegeneracy2})} ;
		\& \node[boxes,minimum height=0.6cm] (tw)  {treewidth~\cite{DvorakK15}} ;
		\& \node[boxes,minimum height=0.6cm,unk] (hindex)  {$H$-index} ;\\
		\node[boxes,minimum height=0.6cm,unk] (ds)  {domination number};
		\&
		\& \node[boxes,minimum height=0.6cm] (dperf) {\disttoSec{perfect}};
		\& 	\&
		\& \node[boxes,minimum height=0.6cm] (deg)  {degeneracy (Thm.~\ref{thm:bipartiteDegeneracy2})} ;\\
		\&  \node[boxes,minimum height=0.6cm] (mxdia)  {diameter (Thm.~\ref{thm:bequal2})};	\& 	\&
		\&\node[boxes,minimum height=0.6cm] (cn)  {\tworowsSec{chromatic}{number}} ;
		\& \node[boxes,minimum height=0.6cm] (avg) {average \mbox{degree}};\\
	};
		\draw[thick,-] (dc.south west) to [out=-115,in=115](mcc.north west);
		\draw[thick] (is) -- (mcc);
		\draw[thick] ($(dcl.north)-(1cm,0)$) -- (dc);
		\draw[thick] (dcc) -- (dc);
		\draw[thick] ($(dcc.north)+(0.5cm,0)$) -- ($(vc.south)-(1cm,0)$);
		\draw[thick] (dcl) -- (vc);
		\draw[thick] (ddp) -- (vc);
		\draw[thick] (dcl) -- (ce);
		\draw[thick] (ddp.north) -- (ml);
		\draw[thick] (bw) -- (ml);
		\draw[thick] (dcg) -- (dcc);
		\draw[thick] (dcg) -- (dcl);
		\draw[thick] (dig) -- (dcl);
		\draw[thick] (dig) -- (ddp);
		\draw[thick] (fvs) -- (ddp);
		\draw[thick] (fvs) -- (fes);
		\draw[thick] (pw) -- (ddp);
		\draw[thick] (pw) -- (td);
		\draw[thick] (td) -- (vc);
		\draw[thick] (ml)--(fes);
		\draw[thick] (pw) -- (bw);
		\draw[thick] (ds) -- (is);
		\draw[thick] (dch) -- (dig);
		\draw[thick] (dbp) -- (fvs);
		\draw[thick] (mxd) -- (bw);
		\draw[thick] (mxdia) -- (ds.south);
		\draw[thick] (mxdia) -- (dcg);
		\draw[thick] (tw) -- (fvs);
		\draw[thick] (tw) -- (pw);
		\draw[thick] (hindex) -- (mxd);
		\draw[thick] (deg) -- (tw);
		\draw[thick] (deg) -- (hindex);
		\draw[thick] (cn) -- (deg);
		\draw[thick] (cn) -- (dbp);
		\draw[thick] (deg)--(avg);
		\draw[thick]($(dperf.north)-(1cm,0)$)--(dcg);
		\draw[thick](dperf)--(dch.south);
		\draw[thick](dperf.north east)--(dbp.south);

		\node[below of=dperf,yshift=-6mm,xshift=18mm,text width=6cm] {\Large\textbf{NP-hard with constant parameter values}};
		\node[above of=bw,yshift=2mm,xshift=0mm] {\Large \textbf{XP}};

		\begin{pgfonlayer}{background}

			\draw[rounded corners,fill=red!40]
				($(ds.west |- mxdia.north) 		+ (-\eps, \diff)$)--
				($(dperf.west |- mxdia.north) 	+ (-\eps, \diff)$)--
				($(dperf.north west)		 	+ (-\eps, \diff)$)--
				($(dbp.west |- dperf.north) 	+ (-\eps, \diff)$)--
				($(dbp.north west) 				+ (-\eps, \diff)$)--
				($(dbp.north east) 				+ ( \eps, \diff)$)--
				($(dbp.south east) 				+ ( \eps, -\diff)$)--
				($(deg.north east) 				+ ( \eps, \diff)$)--
				($(avg.south east) 				+ ( \eps,-\diff)$)--
				($(ds.south west |- avg.south) 	+ (-\eps,-\diff)$)--
				cycle;

			\draw[rounded corners,top color=orange!40,bottom color=orange!40]
				($(bw.west |- ml.north)		+ (-\eps, \diff)$)--
				($(bw.east |- ml.north)		+ ( \eps, \diff)$)--
				($(mxd.south east)			+ ( \eps,-\diff)$)--
				($(mxd.south west)			+ (-\eps,-\diff)$)--
				($(tw.north east)			+ ( \eps,-\diff)$)--
				($(tw.south east)			+ ( \eps,-\diff)$)--
				($(tw.south west)			+ (-\eps,-\diff)$)--
				($(pw.north west)			+ (-\eps, \diff)$)--
				($(bw.west |- pw.north)		+ (-\eps, \diff)$)--
				cycle;

			\draw[line width=0.5pt,rounded corners,top color=green!40,bottom color=green!40]
				($(dc.north west |- ml.north)		+(-\eps, \diff)$)--
				($(ml.north east)		+( \eps, \diff)$)--
				($(fes.south east)	+( \eps,-\diff)$)--
				($(td.south west |- fes.south)	+(-\eps,-\diff)$)--
				($(td.north west |- vc.south)		+(-\eps,-\diff)$)--
				($(vc.south west)		+(-\eps,-\diff)$)--
				($(vc.west |- dcl.south)	+(-\eps,-\diff)$)--
				($(dcl.south west)	+(-\eps,-\diff)$)--
				($(dcl.west |- ce.south)		+(-\eps,-\diff)$)--
				($(dc.west |- ce.south)		+(-\eps,-\diff)$)--
				cycle;
		\end{pgfonlayer}
	\end{tikzpicture}
	}
	\vspace{-6mm}
	\caption{
		The parameterized complexity of SP-MVE with unit-length edges with respect to different graph parameters.
		Herein, ``distance to~$X$'' denotes the number of vertices that have to be deleted in order to transform the input graph into a graph of the graph class~$X$.
		For two parameters that are connected by a line, the upper parameter is weaker (that is, larger) than the parameter below \cite{KN12}.
		In the later sections we will only define the graph parameters that we directly 
		work with.
		Refer to \citet{SW13} for formal definitions of all parameters.
	}
	\label{fig:par-hier}
\end{figure}
Moreover, towards the goal of spotting further fixed-parameter tractable special cases, it also suggests to look for reasonable parameter combinations.
In addition, \cref{tab:sp-mve-overview} overviews our exact and approximate complexity results for SP-MVE.
\begin{table}[t!]
	\setlength{\tabcolsep}{8pt}
	\centering	
	\label{tab:sp-mve-overview}
	\begin{tabular}{ccc}
		\toprule
								& $k$													& $\ell$ \\
		\midrule
		related to			&	XP								 					& NP-hard for~$b=2$ and~$\ell=9$ \\
		polynomial time	&									 					& $\ell$-approximation  \\
		\midrule
		\multirow{3}{*}{related to fpt time}			& $n/2^{O(\sqrt{\log n})}$-approximation	& $r(n)$-approximation for\\
				 		& for unit-length edges							& every increasing~$r$\\
								\cmidrule{2-3}
								& \multicolumn{2}{c}{fpt with respect to combined parameter~$(k,\ell)$} \\
		\bottomrule
	\end{tabular}
	\caption{Overview on the computational complexity classification of SP-MVE on $n$-vertex graphs. %
}
\end{table}
\Cref{fig:gcs-hier} summarizes our understanding of the complexity of \SPMVE with unit-length edges on several graph classes. %

\begin{figure}[t!]
	\centering
	\newcommand{\tworowsSec}[2]{#1 #2}
	\newcommand{\disttoSec}[1]{distance~to #1}
	\tikzstyle{boxes}=[draw,thick, rounded corners=3mm,text width=2.6cm,align=center,text opacity=1,fill opacity=1,fill=white]
	\tikzstyle{unk}=[fill=gray!15!white]
	\def\eps{0.08}
	\def\diff{0.18}
	\newcommand*{\ExtractCoordinate}[1]{\path (#1); \pgfgetlastxy{\XCoord}{\YCoord};}%
	\resizebox{0.8\textwidth}{!}
	{
	\begin{tikzpicture}
	\matrix (first) [ampersand replacement=\&,row sep=0.4cm,column sep=0.2cm]
	{
		\& \node[boxes,minimum height=0.6cm] (perfect) {perfect graphs};
		\&
		\& \node[boxes,minimum height=0.6cm] (planar) {planar graphs \cite{FHNN18}};
		
		\\
		
		\node[boxes,minimum height=0.6cm] (chor) {chordal graphs};
		\& 
		\& \node[boxes,minimum height=0.6cm] (bip) {bipartite graphs \\ (\cref{thm:bipartiteDegeneracy2})};
		\& \node[boxes,minimum height=0.6cm] (btw) {bounded tree\-width graphs \cite{DvorakK15}};
		\\
		\node[boxes,minimum height=0.6cm,unk] (int) {interval graphs};
		\& \node[boxes,minimum height=0.6cm] (split) {split graphs \\ (\cref{thm!split})};
		\& \node[boxes,minimum height=0.6cm] (cog) {cographs \\ (\Cref{prop:diamtwo})};
		\& \node[boxes,minimum height=0.6cm] (sp) {series-parallel graphs};
		\\
		\node[boxes,minimum height=0.6cm,fill=gray!5!white] (pint) {proper interval graphs~\cite{Stahlberg16}};
		\& \node[boxes,minimum height=0.6cm] (thr) {threshold graphs};
		\&
		\&
		 \node[boxes,minimum height=0.6cm] (outplanar) {outerplanar graphs};
		\\
		\& \node[boxes,minimum height=0.6cm] (complete) {complete graphs};
		\& \node[boxes,minimum height=0.6cm] (trees) {trees};
		\& \node[boxes,minimum height=0.6cm] (cactus) {cacti};
		\\
	};
		\draw[thick] (complete)-- (pint);
		\draw[thick,-] ($(trees.north)+(10mm,0)$) to [out=60,in=-60]($(bip.south)+(12mm,0)$);
		\draw[thick] (pint)-- (int);
		\draw[thick] (int)-- (chor);
		\draw[thick,-] ($(trees.north)+(11mm,0)$) to (outplanar);
		\draw[thick] (cactus)-- (outplanar);
		\draw[thick] (outplanar)-- (sp);
		\draw[thick,-] ($(sp.north east)-(1mm,0)$) to [out=60,in=-60]($(planar.south east)-(1mm,0)$);
		\draw[thick] (thr)-- (split);
		\draw[thick] (thr.north west)-- (int);
		\draw[thick] ($(thr.north east)-(3mm,0)$)-- ($(cog.south west)+(4mm,0)$);
		\draw[thick] (sp)-- (btw);
		\draw[thick] ($(cog.north west)+(2mm,0)$)-- (perfect);
		\draw[thick] (bip.north)-- (perfect);
		\draw[thick] (chor.north)-- (perfect);
		\draw[thick] (split)-- (chor);
		\draw[thick] (complete)-- (thr);
		
		\node[left of=perfect,yshift=1mm,xshift=-20mm] () {\large \textbf{NP-hard}};
		\node[below of=cog,yshift=-3mm,xshift=1mm,text width=3.25cm] () {\large \textbf{polynomial-time solvable}};

		\begin{pgfonlayer}{background}

			\draw[rounded corners,fill=red!40]
				($(chor.west |- planar.north) 		+ ( -\eps, \diff)$)--
				($(planar.east |- planar.north) 	+ ( \eps, \diff)$)--
				($(planar.south east) 				+ ( \eps,-\diff)$)--
				($(bip.east |- planar.south) 		+ ( \eps,-\diff)$)--
				($(bip.south east) 					+ ( \eps,-\diff)$)--
				($(split.east |- bip.south) 		+ ( \eps,-\diff)$)--
				($(split.south east) 				+ ( \eps,-\diff)$)--
				($(split.south west) 				+ (-\eps,-\diff)$)--
				($(split.west |- chor.south)		+ (-\eps,-\diff)$)--
				($(chor.south west) 				+ (-\eps,-\diff)$)--
				cycle;
				
			\draw[line width=0.5pt,rounded corners,top color=green!40,bottom color=green!40]
				($(thr.west |- thr.north)	+(-\eps, \diff)$)--
				($(cog.west |- thr.north)	+(-\eps, \diff)$)--
				($(cog.west |- sp.north)	+(-\eps, \diff)$)--
				($(sp.north west)			+(-\eps, \diff)$)--
				($(btw.west |- sp.north)	+(-\eps, \diff)$)--
				($(btw.north west)			+(-\eps, \diff)$)--
				($(btw.north east)			+( \eps, \diff)$)--
				($(btw.south east)			+( \eps,-\diff)$)--
				($(cactus.south east)		+( \eps,-\diff)$)--
				($(complete.south west)		+(-\eps,-\diff)$)--
				($(thr.west |- complete.south)			+(-\eps,-\diff)$)--
				cycle;
				
			\draw[line width=0.5pt,dashed,rounded corners,top color=green!20,bottom color=green!20]
				($(pint.south west) 		+ (-0.25*\eps,-0.5*\diff)$)--
				($(pint.north west) 		+ (-0.25*\eps,0.5*\diff)$)--
				($(pint.north east) 		+ (0.25*\eps,0.5*\diff)$)--
				($(pint.south east) 		+ (0.25*\eps,-0.5*\diff)$)--
				cycle;
		\end{pgfonlayer}
	\end{tikzpicture}
	}
	\caption{
		Computational complexity of SP-MVE with unit-length edges for some graph classes.
		For SP-MVE with unit-length edges on proper interval graphs, we conjecture that it is solvable in polynomial time.
	}
	\label{fig:gcs-hier}
\end{figure}

\paragraph*{Organization of the paper.}
After introducing
some preliminaries in \cref{sec_prelim}, we prove in \cref{sec_complexity} our NP-hardness results. In \cref{sec_polynomial},
we present our polynomial-time solvable special cases.
In \cref{sec_fpt}, we provide  parameterized and approximation algorithms for SP-MVE.
Conclusions and open questions are provided in \cref{sec_conclusion}.

\section{Preliminaries}\label{sec_prelim}

For an undirected graph~$G=(V,E)$ we set~$n:=|V|$ and~$m:=|E|$. %
A path~$P$ of length~$r-1$ in~$G$ is a sequence of distinct vertices~$P = v_1$-$v_2$-$\ldots$-$v_r$ with~$\{v_i,v_{i+1}\} \in E$ for all~$i \in \{1,\ldots, r-1\}$; the vertices~$v_1$ and~$v_r$ are the endpoints of the path.
For~$1 \le i < j \le r$, we set~$v_iPv_j$ to be the subpath of~$P$ starting in~$v_i$ and ending in~$v_j$, formally~$v_iPv_j := v_i$-$v_{i+1}$-$\ldots$-$v_{j}$.
For~$i = 1$ or~$j=r$ we omit the corresponding endpoint, that is, we set~$Pv_j := v_1Pv_j$ and~$v_iP := v_iPv_r$.%
For~$u,v \in V$, a~$uv$-path~$P$ is a path with endpoints~$u$ and~$v$.
The distance between~$u$ and~$v$ in~$G$, denoted by~$\dist_G(u,v)$, is the length of a shortest~$uv$-path.
The diameter of~$G$ is the length of the longest shortest path in~$G$.

For $v\in V$ let $N_G(v)$ be the set of neighbors of~$v$ and 
let $N_G[v]=N_G(v)\cup \{v\}$ be~$v$'s closed neighborhood.
Two vertices~$u,v \in V$ are called \emph{true twins} if~$N_G[u] = N_G[v]$ and \emph{false twins} if~$N_G(u) = N_G(v)$ but~$N_G[u] \neq N_G[v]$; they are called \emph{twins} if they are either true or false twins.
We denote by $G-S$ the graph obtained from $G$ by removing the edge subset $S \subseteq E$.
For~$s,t \in V$, an edge subset~$S$ is called $st$-cut if~$G-S$ contains no~$st$-path.
For~$V' \subseteq V$ let~$G[V']$ denote the subgraph induced by~$V'$.
For~$E' \subseteq E$ let~$G[E']$ denote the subgraph consisting of all endpoints of edges in~$E'$ and the edges in~$E'$.

\paragraph{Parameterized complexity.}
A parameterized problem consisting of input instance~$I$ and 
parameter~$k$ is called \emph{fixed-pa\-ra\-me\-ter tractable} (fpt) if there is an algorithm that decides any instance $(I,k)$ %
in $f(k)\cdot|I|^{O(1)}$ time for some computable function~$f$ solely depending on~$k$, where~$|I|$ denotes the size of~$I$. 
On the contrary, the parameterized complexity class~XP contains 
all parameterized problems that can be solved in $|I|^{f(k)}$~time; 
in other words,
membership in~XP means polynomial-time solvability when the parameter value is 
a constant.

A core tool in the development of fixed-parameter tractability results is polynomial-time preprocessing by data reduction, called \emph{kernelization}~\cite{Kra14,GN07SIGACT}.
Here, the goal is to transform a given problem instance~$(I, k)$ in polynomial time into an equivalent instance~$(I', k')$ whose size is upper-bounded by a function of~$k$.
That is, $(I, k)$ is a yes-instance if and only if~$(I', k')$ with $|I'|,k'\leq g(k)$ for some function~$g$ is a yes-instance.
Thus, such a transformation is a polynomial-time self-reduction with the constraint that the reduced instance is ``small'' (measured by~$g(k)$).
If such a transformation exists, then $I'$ is called \emph{(problem) kernel} of size~$g(k)$.

\paragraph{Approximation.}
Given an NP optimization problem and an instance $I$ of this problem, we use 
$\opt(I)$ to denote the optimum value of $I$ and $\val(I,S)$ to denote the value of a feasible solution $S$ of instance $I$.
The {\em approximation ratio\/} of $S$ (or {\it approximation factor}) is $r(I,S)=\max\left\{\frac{\val(I,S)}{\opt(I)}, \frac{\opt(I)}{\val(I,S)}\right\}.$
For a function $\rho$, an algorithm~$\mathcal{A}$ is a {\it $\rho(|I|)$-approximation\/} if for every instance~$I$ of the problem, it returns a solution $S$ such that $r(I,S) \leq \rho(|I|)$.
If the problem comes with a parameter~$k$ and the algorithm~$\mathcal{A}$ runs in~$f(k) \cdot |I|^{O(1)}$ time, then $\mathcal{A}$ is called \emph{parameterized $\rho(|I|)$-approximation}.

\section{NP-hardness results} \label{sec_complexity}

In this section, we provide several hardness results for restricted variants of \SPMVE. 
We start by adapting a reduction idea due to \citet{KBBEGRZ08} for the vertex deletion variant of SP-MVE.
We prove that SP-MVE is NP-hard even for constant values of $b$, $\ell$, and the diameter of the input graph.

\begin{theorem}\label{thm:bequal2}
	\SPMVE is NP-hard, even for unit-length edges, $b=2$, $\ell = 9$, and diameter~$8$.
\end{theorem}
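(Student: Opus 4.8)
The plan is to give a polynomial-time many-one reduction from the NP-hard \VC problem, adapting the vertex-deletion construction of \citet{KBBEGRZ08} via the standard device of representing each ``deletable vertex'' by a dedicated ``deletable edge.'' Given an instance $(H,p)$ of \VC with $H=(V_H,E_H)$, I would build an undirected, unit-length, \emph{bipartite} graph $G$ with designated vertices $s,t$ lying in opposite colour classes, together with the bound $k:=p$ and target length $\ell:=9$ (so that $\dist_G(s,t)=7$ and $b=2$). The guiding correspondence is that blocking all the ``short'' $st$-paths of length $7$ should be exactly as hard as covering all edges of $H$.

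Concretely, I would introduce for every vertex $v\in V_H$ a single \emph{vertex edge} $f_v$ whose deletion plays the role of placing $v$ into the cover, and for every edge $\{i,j\}\in E_H$ a length-$7$ $st$-path $P_{ij}$ that traverses both $f_i$ and $f_j$ and is otherwise built from \emph{private} vertices and edges used by no other short path. Since $P_{ij}$ contains $f_i$ and $f_j$, deleting either one destroys it; hence deleting $\{f_v:v\in C\}$ destroys all short paths iff $C$ is a vertex cover. The remaining pieces of $G$ are the $s$- and $t$-side connectors that make each $P_{ij}$ have length exactly $7$, and a redundant ``backbone'' of several internally disjoint $st$-paths of length $9$ (more than $k$ of them), whose purpose is to keep $s,t$ $(k{+}1)$-edge-connected — so that the instance respects the standing assumption that $k$ is below the minimum $st$-cut — and to guarantee that a length-$9$ route always survives. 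Making $G$ bipartite with $s,t$ on opposite sides forces \emph{every} $st$-path to have odd length; thus there is no $st$-path of length $8$, and once all length-$7$ paths are gone the shortest surviving path has length at least $9$. This is precisely where $b=2$ (rather than the polynomial case $b=1$) enters: the distance can only jump between consecutive odd values.

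For correctness, the forward direction is immediate: a cover $C$ of size $\le p$ yields the deletion set $S=\{f_v:v\in C\}$, which removes every $P_{ij}$, leaving $\dist_{G-S}(s,t)\ge 9$ by the parity argument and $=9$ because a backbone path survives. The reverse direction is an exchange argument: given any $S$ with $|S|\le k$ and $\dist_{G-S}(s,t)\ge 9$, every short path $P_{ij}$ must be hit by $S$; if it is hit only through one of its private edges (or part of the budget is spent on backbone edges, which lie on no short path), I replace that edge by $f_i$, which cannot un-hit any other short path and cannot decrease the $st$-distance. Iterating yields an equally small solution $S'\subseteq\{f_v:v\in V_H\}$, and $\{v:f_v\in S'\}$ is then a vertex cover of size $\le p$.

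The main obstacle is the simultaneous enforcement of all numerical side constraints by the gadget geometry: keeping $G$ bipartite, making the shortest $st$-path length exactly $7$ with no accidental shorter or even-length shortcuts, ensuring the $f_v$ are the \emph{only} edges shared by several short paths (so that the exchange argument is valid), and — most delicately — bounding the diameter by $8$. The last point conflicts with the long backbone paths, whose internal vertices are naturally far apart; I expect to resolve this by wiring all ``deep'' vertices to a small number of hubs at controlled distances from $s$ and $t$, and then verifying case-by-case that these shortcuts neither reduce $\dist_G(s,t)$ below $7$ nor break bipartiteness. Getting the constants $7$, $8$, and $9$ to line up consistently across the vertex, edge, and backbone gadgets is the real work of the proof; the covering/blocking equivalence itself is then routine.
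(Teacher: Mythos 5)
Your high-level plan — simulate the vertex deletions of \citet{KBBEGRZ08} by dedicated ``vertex edges''~$f_v$, with one short $st$-path per edge of the source graph — is indeed the paper's strategy, and the bipartite/parity idea is a nice touch. But there is a genuine gap: you reduce from \emph{general} \textsc{Vertex Cover}, and for a general graph~$H$ the ``gadget geometry'' you defer as routine is not merely delicate but impossible. The problem is that all short paths through~$f_i$ share the two endpoints of~$f_i$, so their prefixes and suffixes recombine freely. If $f_i$ is traversed at distance~$a$ from~$s$ on one length-$7$ path and at distance~$a'>a$ (or with the opposite orientation) on another, then gluing the first path's prefix to the second path's suffix yields an $st$-path of length $7-(a'-a)<7$, destroying $\dist_G(s,t)=7$. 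Hence every~$f_v$ must occupy a globally consistent position $p_v\in\{1,\dots,7\}$ on all short paths through it, and since $f_i$ and $f_j$ are distinct edges on the common path~$P_{ij}$, adjacent vertices of~$H$ must receive distinct positions. In other words, your construction presupposes a proper coloring of~$H$ with at most seven color classes, computed in polynomial time — which does not exist for general graphs (unbounded chromatic number) and is NP-hard to find even when it exists. (In fact the constraint is stronger: with consecutive positions $p_{j'}=p_j+1$, a prefix attached to~$x_{j'}$, the middle segment of~$P_{jj'}$ traversed backwards, and a suffix attached to~$y_j$ concatenate to an $st$-path of length $2(p_{j'}-p_j)+5=7$ that avoids \emph{every} vertex edge, so positions on adjacent vertices must differ by at least two.) This is exactly why the paper's proof of \cref{thm:bequal2} reduces from \textsc{Vertex Cover} on \emph{tripartite} graphs with the partition given: the three classes supply the three consistent positions — edges $\{s,v\}$ for $V_1$, the middle edges $\{v,v'\}$ for $V_2$, and edges $\{v,t\}$ for $V_3$.

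Two further divergences are worth noting. First, instead of single private edges plus an exchange argument, the paper replaces every connector by an ``edge-gadget'' of $n$ internally disjoint parallel paths; since the budget satisfies $k=h<n$, deleting gadget edges is provably useless, so any solution consists of vertex edges outright, and the exchange step (whose distance-preservation you would still have to verify against restored hybrid paths) is never needed. Second, your diameter bound is violated by your own backbone: internal vertices of two internally disjoint length-$9$ backbone paths, one at distance~$4$ from~$s$ and the other at distance~$5$ from~$s$, are at distance~$9$ from each other (every route between them passes through~$s$ or~$t$), and patching this with hub vertices threatens both bipartiteness and $\dist_G(s,t)=7$. In the paper the diameter-$8$ bound falls out of the dense gadget structure, in which every vertex lies within distance roughly four of~$s$ or~$t$. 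Your approach can be repaired — start from tripartite \textsc{Vertex Cover} and use parallel gadgets in place of private edges — but that repair essentially reconstructs the paper's proof.
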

{\begin{proof}
	As \citet[Theorems 8 and 11]{KBBEGRZ08}, we reduce from the NP-hard \cite[GT1]{GJ79} problem~\VC on tripartite graphs, where the question is, given a tripartite graph~$G=(V=V_1 \uplus V_2 \uplus V_3,E)$ and an integer~$h\geq0$, whether there is a subset~$V'\subseteq V$ with~$|V'|\leq h$ such that $G[V\setminus V']$ contains no edge.
	While the fundamental approach remains the same, the technical details when moving their vertex deletion scenario to our edge deletion scenario change to quite some extent.
	We refrain from a step-by-step comparison.
	Given a \VC instance~$(G,h)$ with~$G = (V_1 \uplus V_2 \uplus V_3,E)$ being a tripartite graph on $n$ vertices, we construct an SP-MVE instance $I'=(G',k,\ell)$ as follows.
	First, let~$k := h$ and~$\ell := 9$.
	The graph $G'=(V',E')$ contains vertices $V' = V_1 \uplus V_2\uplus V_3\uplus V_2' \uplus \{s,t\}$, where $s$ and $t$ are two new vertices, and for each $v\in V_2$ we add a copy $v'\in V_2'$.
	
	Before describing the edge set~$E'$, we introduce edge-gadgets.
	Here, by adding a length-$\alpha$ \emph{edge-gadget}~$e_{u,v}$, $\alpha \ge 2$, from the vertex~$u$ to vertex~$v$, we mean to add $n$~vertex-disjoint paths of length~$\alpha-2$ and to make~$u$ adjacent to the first vertex of each path and~$v$ adjacent to the last vertex of each path.
	If $\alpha = 2$, then each path is just a single vertex which is at the same time the first and last vertex.
	The idea behind this is that one will never delete edges in an  edge-gadget.

	We add the following edges and edge-gadgets to~$G'$ (see \cref{fig:np-hardness} for a schematic representation of the constructed graph).
	\begin{figure}[t]
		\centering
		\tikzstyle{knoten}=[circle,draw,fill=black!20,minimum size=5pt,inner sep=2pt]
		\tikzstyle{knoten-set}=[ellipse,draw,fill=black!05,minimum height=2cm,minimum width=0.5cm]
		\begin{tikzpicture}[>=stealth',draw=black!75]
			\node[knoten,label=below:{$s$}] (s) at (0,0) {};
			\node[knoten,label=below:{$t$}] (t) at (8,0) {};

			\node[knoten-set,label=below:{$V_1$}] (V1) at (2, 0) {};
			\path (s) edge[-] node[auto,swap] {1} (V1);

			\node[knoten-set,label=above:{$V_2$}] (V2)  at (3.5, 1.75) {};
			\node[knoten-set,label=above:{$V'_2$}] (VV2) at (4.5, 1.75) {};
			\path (VV2) edge[-] node[auto,swap] {1} (V2);
			\path (V2) edge[ultra thick,-,bend right] node[auto,swap] {4} (s);
			\path (t) edge[ultra thick,-,bend right] node[auto,swap] {4} (VV2);

			\node[knoten-set,label=below:{$V_3$}] (V3) at (6, 0) {};
			\path (t) edge[-] node[auto,swap] {1} (V3);

			\path (V1) edge[ultra thick,-] node[auto,swap] {2} (V2);
			\path (V1) edge[ultra thick,-] node[auto,swap] {5} (V3);
			\path (VV2) edge[ultra thick,-] node[auto,swap] {2} (V3);

			\foreach \i in {1,...,4}{
				\node[knoten] () at (2, 0.3 * \i - 0.75) {};
				\node[knoten] () at (6, 0.3 * \i - 0.75) {};
				\node[knoten] () at (3.5, 0.3 * \i + 1) {};
				\node[knoten] () at (4.5, 0.3 * \i + 1) {};
			}
		\end{tikzpicture}
		\caption{
			A schematic representation of the graph~$G'$ constructed from the tripartite graph~$G=(V_1 \uplus V_2 \uplus V_3,E)$.
			The vertices are grouped into the described sets.
			The edges in the picture correspond to edge sets in~$G'$ and cover the incidence structure of the displayed vertices in~$G'$.
			A bold edge indicates an edge-gadget and the corresponding number denotes its length.
		}
		\label{fig:np-hardness}
	\end{figure}
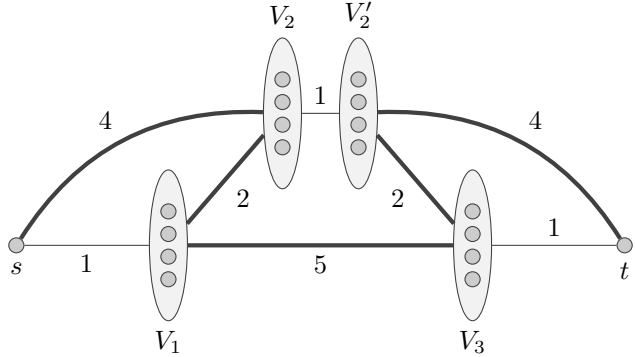
	For each vertex~$v \in V_2$ we add the edge~$\{v,v'\}$ %
        between $v$ and its copy~$v'$.
	For each vertex $v\in V_1$, we add the edge~$\{s,v\}$, and for each vertex~$v\in V_3$, we add the edge~$\{v,t\}$.
	We also add the following edge-gadgets:
	For each edge $\{u,v\} \in (V_1 \times V_2) \cap E$ we add the edge-gadget $e_{u,v}$ of length two, for each edge $\{u,v\} \in (V_2\times V_3) \cap E$ we add the edge-gadget $e_{u',v}$ of length two, where~$u'\in V_2'$ is the copy of~$u$, and for each edge $\{u,v\} \in (V_1\times V_3) \cap E$ we add the edge-gadget $e_{u,v}$ of length five.
	Furthermore, we add edge-gadgets of length four between $s$ and every vertex $v\in V_2$ and between $t$ and every vertex $v'\in V_2'$.
	Observe that we have $\dist_{G'}(s,t)=7$ and thus~$b = \ell - \dist_{G'}(s,t) = 2$. %

	We now show that $G$ has a vertex cover of size at most $h$ if and only if deleting~$k=h$ edges in~$G'$ results in~$s$ and~$t$ having distance at least~$\ell = 9$.

	``$\Rightarrow:$''
	Let $V'' \subseteq V$ be a vertex cover of size at most~$h$ in $G$.
	Consider the edge sets~$E_1'':=\{\{s,v\} : v\in V_1 \cap V''\}$, $E_2'':=\{\{\{v,v'\} : v\in V_2 \cap V'',v'\in V_2'\text{~copy of~}v\}$, and $E_3'':=\{\{v,t\} : v\in V_3 \cap V''\}$.
	We claim that for the set $$E''= E_1'' \cup E_2''\cup E_3''$$ it holds that~$\dist_{G'-E''}(s,t) \geq 9$  and~$|E''| = |V''| \le h$.
	Clearly, $|E''| = |V''| \le h$.
	Suppose towards a contradiction that~$\dist_{G'-E''}(s,t) < 9$.
	Let~$P$ be an $st$-path of length less than~nine.
	Observe that~$P$ contains an edge connecting~$s$ with some vertex in~$V_1$ or an edge connecting~$t$ with some vertex in~$V_3$.
	We discuss only the first case, as the second follows by symmetry.
	
	Let~$P$ contain an edge connecting~$s$ with vertex~$u$ in~$V_1$.
	Path~$P$ contains either~(i) a subpath of length~three to vertex in~$V_2'$, or~(ii) a subpath of length~five to a vertex in~$V_3$.
	
	\emph{Case (i)}:
	Let~$u-a_1-a_2-v-v'$, with~$v\in V_2$ and~$v'\in V_2$ the copy of~$v$, be a subpath of~$P$, where~$a_1,a_2$ are vertices in an edge-gadget~$e_{u,v}$.
	Then~$\{u,v\}\in E$ and~$u,v\not\in V''$, as~$\{s,u\}\not\in E_1''$ and~$\{v,v'\}\not\in E_2''$, contradicting that~$V''$ is a vertex cover of~$G$.
	
	\emph{Case (ii)}:
	Let~$u-a_1-\ldots-a_5-v$ with~$v\in V_3$ be a subpath of~$P$, where~$a_1,\ldots,a_5$ are vertices in an edge-gadget~$e_{u,v}$.
	As~$P$ is of length less than~$9$, it follows that~$P=s-u-a_1-\ldots-a_5-v-t$.
	Then~$\{u,v\}\in E$ and~$u,v\not\in V''$, contradicting that~$V''$ is a vertex cover of~$G$.

	``$\Leftarrow:$''
	Let $E'' \subseteq E'$ be a set of edges such that~$\dist_{G'-E''}(s,t) \ge 9$ and~$|E''| \le h$.
	If $E''$ contains edges from an edge-gadget $e_{u,v}$, then it must contain at least $n$ edges from this gadget in order to have a chance to increase the solution value.
	Therefore, since $h<n$, we can assume that~$E''$ does not contain any edge contained in an edge-gadget.
	Thus, $E'' \subseteq (\{s\}\times V_1) \cup (V_2 \times V'_2) \cup (V_3 \times \{t\})$.
	We construct a vertex cover~$V''$ for~$G$ as follows:
	For each edge~$\{s,v\} \in E''$ it follows that~$v \in V_1$ and we add~$v$ to~$V''$.
	Similarly, for each edge~$\{v,t\} \in E''$ it follows that~$v \in V_3$ and we add~$v$ to~$V''$.
	Finally, for each edge~$\{v,v'\} \in E'' \cap (V_2 \times V_2')$, we add~$v$ to~$V''$.

	Suppose towards a contradiction, that~$V''$ is not a vertex cover in~$G$, that is, there exists an edge~$\{u,v\} \in E$ with~$u,v \notin V''$.
	If~$v \in V_1$ and~$u \in V_2$, then the $st$-path $s$-$v$-$u$-$u'$-$t$ of length~$8 < \ell$ is contained in~$G'-E''$.
	If~$v \in V_1$ and~$u \in V_3$, then the $st$-path $s$-$v$-$u$-$t$ of length~$7 < \ell$ is contained in~$G'-E''$.
	Finally, if~$v \in V_2$ and~$u \in V_3$, then the $st$-path $s$-$v$-$v'$-$u$-$t$ of length~$8 < \ell$ is contained in~$G'-E''$.
	Each of the three cases contradicts the assumption that~$\dist_{G'-E''}(s,t) \ge 9$.
\end{proof}}
\citet{BaierEHKKPSS10} showed that \SPMVE is polynomial-time solvable for the special case of~$b=1$.
\Cref{thm:bequal2} shows that this result cannot be extended to larger values of~$b$.
Regarding the diameter of the input graph, the statement of \Cref{thm:bequal2} will be strengthened later:
Considering the problem with unit-length edges, we show that it remains NP-hard on graphs of diameter three (\Cref{thm!split}), while it becomes polynomial-time solvable on graphs of diameter two (\Cref{prop:diamtwo}).
For arbitrary edge lengths, we show that the problem remains NP-hard on graphs of diameter one~(\Cref{thm:cliques-np-hard}).

When allowing length zero edges, \citet{KBBEGRZ08} stated that it is NP-hard to approximate \kmve within a factor smaller than two.
We consider in this paper only positive edge lengths and, by adapting the construction given in the above proof by considering edge-gadgets of lengths polynomial in~$n$ (with high degree), we obtain the following.

\begin{theorem}\label{thm:apx-hardness}
Unless P${}={}$NP, 
	\kmve is  not $4/3-1/\poly(n)$-approxi\-ma\-ble in polynomial time, even for unit-length edges. 
\end{theorem}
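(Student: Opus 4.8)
The plan is to reduce from \VC on tripartite graphs, exactly as in the proof of \Cref{thm:bequal2}, and to recycle the same graph $G'$ while rescaling the gadget lengths to polynomial values chosen to open up a multiplicative gap of $4/3$ in the optimum objective value (which I read as the length of a shortest $st$-path achievable by deleting $k$ edges). Fix a polynomial $M = \poly(n)$, large enough that every gadget length below is at least two. I would keep the three \emph{connector} edges---$\{s,v\}$ for $v \in V_1$, $\{v,v'\}$ for $v \in V_2$, and $\{v,t\}$ for $v \in V_3$---as single unit-length edges, and keep each edge-gadget as a bundle of $n$ internally vertex-disjoint paths, so that no solution of size $k=h<n$ can break a gadget. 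The only real change is to retune the gadget lengths: taking the $s$-$V_2$ and $V_2'$-$t$ gadgets of length $2M$, the $V_1$-$V_2$ and $V_2'$-$V_3$ gadgets of length $M-2$, and the $V_1$-$V_3$ gadgets of length $3M-2$, a direct computation shows that each of the three kinds of \emph{edge-path} (the $st$-path witnessing an edge of $G$) gets the common length $3M$, whereas each \emph{fallback} path $s$-$v$-$v'$-$t$ routed through the two length-$2M$ gadgets has length $4M+1$ and uses no edge of $G$ at all.

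The vertex/connector correspondence is the one from \Cref{thm:bequal2}: deleting the connector of a vertex amounts to putting that vertex into the cover, and the edge-path witnessing $\{u,v\}\in E$ survives a deletion exactly when neither $u$ nor $v$ is covered. I would then establish the two sides of the gap. If $G$ has a vertex cover of size at most $h$, deleting the $k=h$ corresponding connectors destroys every $st$-path of length less than $4M+1$ (every edge-path, of length $3M$, and every shorter path chaining several edges of $G$ together, exactly as handled by the case analysis of \Cref{thm:bequal2}); since $s$ and $t$ remain connected, the shortest surviving $st$-path has length at least $4M+1$, so $\opt \ge 4M+1$. Conversely, if $G$ has no vertex cover of size $h$, then any edge set $S$ with $|S| \le h$ covers, via its connector edges, a set of at most $h$ vertices that is not a vertex cover; hence some edge of $G$ stays uncovered, and because $|S| < n$ the gadgets on its edge-path stay connected, so an edge-path of length $3M$ survives and $\opt \le 3M$.

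Thus yes-instances have optimum at least $4M+1$ and no-instances at most $3M$, a ratio of at least $(4M+1)/(3M) > 4/3$, so a polynomial-time approximation with ratio strictly below $4/3$ would decide \VC on tripartite graphs; under P${}\neq{}$NP none exists, and in particular none with ratio $4/3 - 1/\poly(n)$, the slack $1/\poly(n)$ absorbing the additive constants forced by the integral gadget lengths, while choosing $M = \poly(n)$ keeps $G'$ of polynomial size. The genuinely delicate part, which I would have to carry out rather than merely assert, is the length bookkeeping inherited from \Cref{thm:bequal2}: verifying both that the three edge-path types can \emph{simultaneously} be given the common length $3M$ while the fallback paths stay a factor $4/3$ longer, and---the real obstacle---that no \emph{intermediate} $st$-path has length strictly between $3M$ and $4M+1$, since such a path could survive the cover-deletions and spoil the lower bound on yes-instances.
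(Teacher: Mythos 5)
Your overall strategy---a gap reduction from \VC on tripartite graphs that recycles the construction of \Cref{thm:bequal2} with polynomially rescaled gadget lengths, forcing yes-instances to have optimum at least $4M+1$ and no-instances at most $3M$---is exactly the paper's approach; the paper uses gadget lengths $x$, $x$, $3x$, $2x$, $2x$ (keeping the unit connectors), which makes the three edge-path types have length $3x+2$ and the fallback $4x+1$, yielding the gap $(4x+1)/(3x+2)=4/3-1/\poly(n)$. Your no-case argument is complete and matches the paper's. The problem is the step you yourself flag as ``the real obstacle,'' and your instinct there is correct: under \emph{your} lengths the feared intermediate paths genuinely exist. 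Consider the zigzag path $s$-$u_1$-$v_1$-$u_2$-$v_2$-$v_2'$-$w$-$t$ with $u_1,u_2\in V_1$, $v_1,v_2\in V_2$, $w\in V_3$, traversing four edge-gadgets of length $M-2$ and three unit connectors: its length is $4(M-2)+3=4M-5$, which lies strictly between $3M$ and $4M+1$ once $M>5$. Hence your claim that every short surviving path is destroyed ``exactly as handled by the case analysis of \Cref{thm:bequal2}'' is not accurate for your parametrization: that case analysis disposes of all such zigzags by pure length counting (a path that enters $V_2$ and returns to $V_1$ is already too long there), and this length argument fails once the $V_1$-$V_2$ gadgets are shortened to $M-2$. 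This is precisely what the paper's choice avoids: with lengths $x,x,3x$ the same zigzag costs $4x+3>4x+1$, so the only paths shorter than the fallback are the three edge-path types and the two-edge chain, and the analysis of \Cref{thm:bequal2} transfers verbatim.

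The gap is closable without changing your lengths, but it needs an argument you did not supply: the zigzag above survives the deletion of the cover connectors only if $u_1,v_2,w\notin V''$ (its three connectors must remain intact), yet it witnesses $\{v_2,w\}\in E$, which $V''$ must cover---a contradiction. One can further check that, for $M$ large enough, this zigzag is the only path type whose length falls in the open interval $(3M,4M+1)$, so adding this single case completes your yes-side bound and in fact gives the marginally stronger gap $(4M+1)/(3M)>4/3$. As written, however, the proposal asserts the crucial yes-case bound while explicitly leaving unverified the very point where your construction deviates from the paper's; the simplest repair is either to carry out the extra case above or to adopt the paper's gadget lengths, under which no intermediate path exists at all.
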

{\begin{proof}
\begin{figure}[t]
	\centering
	\tikzstyle{knoten}=[circle,draw,fill=black!20,minimum size=5pt,inner sep=2pt]
	\tikzstyle{knoten-set}=[ellipse,draw,fill=black!05,minimum height=2cm,minimum width=0.5cm]
	\begin{tikzpicture}[>=stealth',draw=black!75]
		\node[knoten,label=below:{$s$}] (s) at (0,0) {};
		\node[knoten,label=below:{$t$}] (t) at (8,0) {};

		\node[knoten-set,label=below:{$V_1$}] (V1) at (2, 0) {};
		\path (s) edge[-] node[auto,swap] {1} (V1);

		\node[knoten-set,label=above:{$V_2$}] (V2)  at (3.5, 1.75) {};
		\node[knoten-set,label=above:{$V'_2$}] (VV2) at (4.5, 1.75) {};
		\path (VV2) edge[-] node[auto,swap] {1} (V2);
		\path (V2) edge[ultra thick,-,bend right] node[auto,swap] {$2x$} (s);
		\path (t) edge[ultra thick,-,bend right] node[auto,swap] {$2x$} (VV2);

		\node[knoten-set,label=below:{$V_3$}] (V3) at (6, 0) {};
		\path (t) edge[-] node[auto,swap] {1} (V3);

		\path (V1) edge[ultra thick,-] node[auto,swap] {$x$} (V2);
		\path (V1) edge[ultra thick,-] node[auto,swap] {$3x$} (V3);
		\path (VV2) edge[ultra thick,-] node[auto,swap] {$x$} (V3);

		\foreach \i in {1,...,4}{
			\node[knoten] () at (2, 0.3 * \i - 0.75) {};
			\node[knoten] () at (6, 0.3 * \i - 0.75) {};
			\node[knoten] () at (3.5, 0.3 * \i + 1) {};
			\node[knoten] () at (4.5, 0.3 * \i + 1) {};
		}
	\end{tikzpicture}
	\caption{
			A schematic representation of the graph~$G'$ constructed from the tripartite graph~$G=(V_1 \uplus V_2 \uplus V_3,E)$.
			The vertices are grouped to the used sets.
			The edges in the picture correspond to edge sets in~$G'$ and cover the incidence structure of the displayed vertices in~$G'$.
			A bold edge indicates an edge-gadget and the corresponding number denotes its length.
	}
	\label{fig:approx-hardness}
\end{figure}
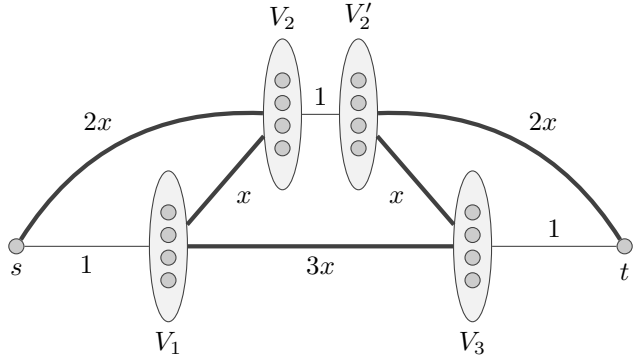

	We construct a gap-reduction~\cite{AL96} from  \VC on tripartite graphs to \kmve.
	More specifically, we use a gap-reduction from a decision problem to a maximization problem. A decision problem $\Pi$ is called \emph{gap-reducible} to a maximization problem~$\Pi'$  with gap~$\rho(|I|) >1$ if
	for any instance~$I$ of~$\Pi$ we can construct an instance~$I'$ of~$Q$ in polynomial time  while satisfying the following properties for some function~$c:\N\to\R\cap (0,+\infty)$.
	\begin{itemize}
		\item If~$I$ is a yes-instance, then $ \opt(I^\prime) \geq c(|I|)$.
		\item If $I$ is a no-instance,  then $\opt(I^\prime) < \frac{c(|I|)}{\rho(|I|)}$.
	\end{itemize}
	The idea behind a $gap$-reduction is that if $\Pi$ is NP-hard then $\Pi^\prime$ is not approximable within a factor $\rho$ provided that P $\neq $ NP.

	Starting with an instance $(G=(V,E),h)$ of \VC on tripartite graphs we construct an instance $I'=(G'=(V',E'), k, s, t)$ of \kmve as in the proof of \cref{thm:bequal2}.  We only change some lengths as follows (see also \cref{fig:approx-hardness}):
	For each edge $\{u,v\} \in (V_1 \times V_2) \cap E$ we add the edge-gadget~$e_{u,v}$ of length~$x$, for each edge~$\{u,v\} \in (V_2\times V_3) \cap E$ we add the edge-gadget~$e_{u',v}$ of length~$x$, where~$u'\in V_2'$ is the copy of~$u$, and for each edge $\{u,v\} \in (V_1\times V_3) \cap E$ we add the edge-gadget $e_{u,v}$ of length $3x$.
	We add edge-gadgets of length $2x$ between $s$ and every vertex $v\in V_2$ and between $t$ and every vertex $v'\in V_2'$. 
	The value $x$ could be any polynomial function in~$|V|=n$.
	Observe that we have $\dist_{G'}(s,t)\leq 3x+2$.

	We now show that if $G$ has a vertex cover of size at most~$h$, then $\opt(I')\geq 4x+1$, otherwise $\opt(I') \leq 3x+2$.

	Let $V'' \subseteq V$ be a vertex cover of size at most~$h$ in~$G$.
	It is not hard to verify (see~proof of \cref{thm:bequal2}) that for the set~$E''= \{\{s,v\} : v\in V_1 \cap V''\} \cup \{\{v,v'\} : v\in V_2 \cap V'',v'\in V_2'\text{~copy of~}v\}\cup \{\{v,t\} : v\in V_3 \cap V''\}$ it holds that~$\dist_{G'-E''}(s,t) = 4x+1$  and~$|E''| = |V''| \le h$.

	Suppose now that $G$ has no vertex cover of size  $h$.  Let $E'' \subseteq E'$ be a set of $h$~edges.
	As in the proof of \cref{thm:bequal2}, we can assume that~$E''$ does not contain any edge from an edge-gadget.
 	Thus~$E'' \subseteq (\{s\}\times V_1) \cup (V_2 \times V'_2) \cup (V_3 \times \{t\})$.
	We construct a vertex set~$V''$ for~$G$ as follows:
	For each edge~$\{s,v\} \in E''$, we add~$v$ to~$V''$ and
	  for each edge~$\{v,t\} \in E''$, we add~$v$ to~$V''$.
	Finally, for each edge~$\{v,v'\} \in E'' \cap (V_2 \times V_2')$, we add~$v$ to $V''$.

	Since~$V''$ is not a vertex cover in~$G$,   there exists an edge~$\{u,v\} \in E$ with~$u,v \notin V''$.
	If~$v \in V_1$ and~$u \in V_2$, then the $st$-path $s$-$v$-$u$-$u'$-$t$ of length~$3x+2$ is contained in~$G'-E''$.
 	If~$v \in V_1$ and~$u \in V_3$, then the $st$-path $s$-$v$-$u$-$t$ of length~$3x+2$ is contained in~$G'-E''$.
	Finally, if~$v \in V_2$ and~$u \in V_3$, then the $st$-path $s$-$v$-$v'$-$u$-$t$ of length~$3x+2$ is contained in~$G'-E''$.

	Since \VC is NP-hard on tripartite graphs \cite[GT1]{GJ79}, \kmve is not $\frac{4x+1}{3x+2}=4/3-1/\poly(n)$-approximable in polynomial time.
\end{proof}}
Concerning special graph classes, we can show that the problem remains NP-hard on restricted bipartite graphs.
To formulate our result, we need the graph parameter degeneracy. A graph~$G$ has \emph{degeneracy}~$d$ if every subgraph of~$G$ contains a vertex of degree at most~$d$.
By subdividing every edge, we obtain the following.
\begin{theorem}\label{thm:bipartiteDegeneracy2}
	\SPMVE is NP-hard, even for bipartite graphs with degeneracy two, unit-length edges, $b = 4$, $\ell = 18$, and diameter~$8$.
\end{theorem}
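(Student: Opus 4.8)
The plan is to reuse verbatim the instance $G'$ produced in the proof of \cref{thm:bequal2} from a \VC instance $(G,h)$ on tripartite graphs, and to apply a single uniform transformation: subdivide every edge of $G'$ exactly once, keeping unit lengths; call the result $H$, and keep $k := h$ while setting $\ell := 18$. The two genuinely new claims, bipartiteness and degeneracy two, are then immediate structural consequences of the subdivision. Colouring every original vertex of $G'$ with one colour and every newly inserted subdivision vertex with the other shows that $H$ is bipartite, since by construction each edge of $H$ joins an original vertex to a subdivision vertex. For the degeneracy, every subdivision vertex has degree exactly two in $H$; hence for any subgraph $S$ of $H$ that contains at least one edge, that edge is incident to a subdivision vertex, which therefore has degree at most two in $S$, while an edgeless subgraph trivially contains a vertex of degree zero. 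Thus every subgraph of $H$ has a vertex of degree at most two, so $H$ has degeneracy two.

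For the numerical parameters the key observation is that subdividing every edge exactly once doubles all distances between original vertices, i.e.\ $\dist_H(x,y) = 2\,\dist_{G'}(x,y)$. In particular $\dist_H(s,t) = 2\cdot 7 = 14$, so with the new target $\ell = 18$ we obtain $b = \ell - \dist_H(s,t) = 4$, matching the claim; the stated bound on the diameter then follows by a direct computation on $H$. The remaining point to settle is that the reduction's correctness transfers. I would argue that $st$\dash paths in $H$ are in length-halving bijection with $st$\dash paths in $G'$ (contracting each subdivision vertex recovers the corresponding $G'$\dash path and vice versa), so that an $st$\dash path of $G'$\dash length $7$ or $8$ becomes one of $H$\dash length $14$ or $16<18$, whereas the length-$\ge 9$ guarantee doubles to $\ge 18$; hence the threshold $\ell=18$ in $H$ plays exactly the role of $\ell = 9$ in \cref{thm:bequal2}.

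It remains to check that the edge-deletion budget and the set of ``useful'' deletions are preserved. Each subdivided selector edge $\{s,v\}$, $\{v,v'\}$, or $\{v,t\}$ becomes a path of length two, and deleting either of its two edges severs exactly that connection at unit cost, so $k=h$ is unchanged. As in the proof of \cref{thm:bequal2}, deleting an edge inside an edge-gadget is never profitable, because each gadget still consists of $n$ vertex-disjoint parallel paths (now of doubled length) and $k=h<n$; hence we may again assume the deleted edges are selector edges, and the equivalence between vertex covers of size $\le h$ in $G$ and deletion sets of size $\le h$ raising $\dist_H(s,t)$ to at least $18$ is inherited from \cref{thm:bequal2} with all lengths doubled. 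I expect the only place requiring genuine care to be precisely this step, namely verifying that the subdivision neither creates new short $st$\dash paths nor alters which deletions help, i.e.\ establishing the two bijections (between $st$\dash paths, and between budget-respecting useful deletions) in $G'$ and $H$; once these are in place the whole equivalence carries over, and the structural parameters (bipartite, degeneracy two) are the easy part.
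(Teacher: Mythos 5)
Your proposal is correct and takes essentially the same route as the paper: the paper's proof is a short self-reduction from the instances of \cref{thm:bequal2} that likewise subdivides every edge, keeps $k$, doubles $\ell$ to $18$, observes that any minimal solution contains at most one edge of each introduced length-two path, and notes that bipartiteness and degeneracy two are immediate. The one misstep is your remark that the diameter bound ``follows by a direct computation'' on $H$: after subdivision $\dist_H(s,t)=2\cdot 7=14>8$, so the diameter-$8$ claim in the statement cannot hold (the diameter roughly doubles to about $16$); this is an inconsistency in the theorem statement itself, which the paper's own proof also silently ignores.
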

{\begin{proof}
	We provide a self-reduction from SP-MVE with unit-length edges with~$b=2$, $\ell = 9$, and diameter~$8$.
	Let~$I = (G=(V,E),k,\ell,s,t)$ be the given SP-MVE instance.
	We construct an instance~$I' = (G',k,2 \ell,s,t)$ where~$G'$ is obtained from~$G$ by subdividing all edges, that is, each edge is replaced by a path of length two.
	The correctness of the reduction is easy to see as any minimal solution contains at most one edge of each of the introduced induced paths of length two.
	Clearly, $I'$ can be computed in polynomial time.
	Furthermore, $G'$ is bipartite and has degeneracy two.
\end{proof}}

We next prove that \SPMVE remains NP-hard on split graphs.
A split graph is a graph whose vertex set can be partitioned into a clique and an independent set. 
Observe that a split graph has diameter at most three.
Thus, the next theorem also shows NP-hardness on diameter-three graphs.

\begin{theorem}\label{thm!split}
	\SPMVE is NP-hard on split graphs, even for unit-length edges.
\end{theorem}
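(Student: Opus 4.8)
The plan is to give a polynomial-time many-one reduction from \VC to \SPMVE restricted to split graphs, staying with unit-length edges and reusing the incidence-based idea already employed for \cref{thm:bequal2}. Given a \VC instance $(G=(V,E),h)$, I would build a split graph $G'=(C\uplus \I,E')$ together with terminals $s,t$ and set $k:=h$; the target length $\ell$ must be chosen so that $b\ge 2$, since the case $b=1$ is polynomial-time solvable and hence cannot be the source of hardness.

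The construction I envisage encodes the vertices of $G$ on one side and its edges on the other: the clique $C$ carries the vertex gadgets while the independent set $\I$ carries the edge gadgets (and the terminals), with incidence edges joining each edge-vertex to the two clique vertices representing the endpoints of the corresponding edge of $G$. The terminals $s$ and $t$ would be attached so that the \emph{short} $s$-$t$ paths are exactly the ones running through an edge gadget, each such path being destroyed precisely when an incidence edge belonging to one of its endpoints is deleted. Deleting the (at most $h$) incidence edges that correspond to a vertex cover should then destroy all short paths simultaneously, which is the forward direction. For the converse I would argue that any solution of size at most $k$ may be assumed to consist only of incidence edges, and then read a vertex cover of size at most $h$ off the deleted edges.

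To make the reduction respect the standing assumption that $k$ is below the size of a minimum $st$-cut — and thus to rule out the trivial solution of disconnecting $s$ from $t$ — I would inflate the minimum cut by adding, in parallel, many internally disjoint length-two connections through fresh independent-set vertices (the split-compatible analogue of the edge-gadgets of \cref{thm:bequal2}, where long subdivided paths are not available because a connected split graph has diameter at most three). These bundles should be prohibitively expensive to cut entirely, so that an optimal solution never touches a gadget and is forced onto the thin incidence edges.

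The main obstacle is precisely this diameter-three density. Because every connected split graph has diameter at most three, all $s$-$t$ paths are very short and the neighbourhoods of $s$ and $t$ inside the clique are completely interconnected; this gives a large supply of competing short paths and makes a length increase of two delicate to engineer. The crux of the proof will be to lay out the vertex and edge gadgets and the terminal attachments so that (i)~a distance increase of at least two is genuinely achievable, (ii)~the short paths that must be destroyed are in cover-correspondence with the edges of $G$, and (iii)~no deletion of $k$ edges unrelated to a vertex cover can raise the distance to $\ell$. Verifying item~(iii) in the backward direction — that the dense clique offers no cheaper, cover-unrelated shortcut past the gadgets — is where I expect the argument to require the most care.
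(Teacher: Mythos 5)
Your construction cannot work as described, and the obstacle is not the verification detail you defer to the end but the architecture you commit to. You place both terminals in the independent set, so every neighbour of~$s$ and every neighbour of~$t$ lies in the clique~$C$ (independent-set vertices are pairwise non-adjacent; for the same reason your cut-inflating bundles of length-two paths through fresh independent-set vertices cannot even be attached to~$s$ or~$t$). Now let~$S$ be any deletion set and let~$A \subseteq N(s)$ and~$B \subseteq N(t)$ be the neighbours of~$s$ and~$t$ that survive in~$G'-S$. If~$\dist_{G'-S}(s,t)\ge 4$, then~$A\cap B=\emptyset$ and every pair in~$A\times B$, being a clique edge, must have been deleted; since these clique edges are disjoint from the deleted edges incident to~$s$ and~$t$, we get $|S| \ge (\deg(s)-|A|)+(\deg(t)-|B|)+|A|\cdot|B| \ge \deg(s)+\deg(t)-1$ whenever~$A,B\neq\emptyset$, and $|S|\ge\deg(s)$ or $|S|\ge\deg(t)$ otherwise. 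In every case~$|S|$ is at least the size of a minimum $st$-cut, so under the paper's standing assumption that~$k$ is smaller than every $st$-cut no feasible solution achieves distance four. Since~$\dist_{G'}(s,t)\ge 2$ for non-adjacent terminals, the increase~$b\ge 2$ that you correctly identify as necessary is unachievable: every instance of your shape is a trivial no-instance, decidable in polynomial time, so no NP-hardness reduction can produce such instances. Concretely, your forward direction already fails: deleting the~$h$ incidence edges of a vertex cover leaves every path $s$-$u$-$v$-$t$ with~$u\in N(s)$, $v\in N(t)$ intact, because its middle edge is one of the~$\binom{|C|}{2}$ clique edges that exist regardless of the edge set of the \VC instance.

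The paper's proof avoids this by not reducing from \VC with a small budget at all. It gives a self-reduction from \SPMVE on general graphs: every edge~$\{u,v\}$ of~$G$ is replaced by~$n^2$ parallel length-two paths through fresh independent-set vertices, the set~$V(G)$ (containing~$s$ and~$t$) is turned into a clique, and the new parameters are~$k'=\binom{n}{2}+k\cdot n^2$ and~$\ell'=2\ell$. The budget explicitly contains~$\binom{n}{2}$, enough to delete \emph{every} clique edge (which the forward direction does), and the~$n^2$-fold multiplicity forces any minimal solution to spend the remaining budget in blocks of~$n^2$, one block per original edge, which is how a size-$k$ solution for~$G$ is recovered in the backward direction. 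The lesson your counting argument above makes precise is that on split graphs any hardness construction must budget for destroying essentially the whole clique; once that is accepted, a large-budget self-reduction is the natural route, and redoing the \VC gadgetry of \cref{thm:bequal2} inside a clique offers no advantage over it.
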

\begin{proof}
 We reduce from SP-MVE on general graphs.
 Let $\I':=(G=(V,E),s,t,k,\ell)$ be an instance of SP-MVE, recall $n=|V|$.
 We obtain the graph~$G'=(V',E')$ from~$G$ by subdividing each edge of~$G$, and subsequently turning $V(G)$ into a clique.
 Formally, the graph $G'=(V',E')$ is defined through %
 \begin{align*}
  V' &:= V\cup (W:=\{w^{\{u,v\}}_j\mid \{u,v\}\in E, j\in[n^2]\}), \\
  E' &=  \binom{V}{2} \cup \left\{\{u,w^{\{u,v\}}_j\},\{v,w^{\{u,v\}}_j\}\mid \{u,v\}\in E, j\in[n^2]\right\} .
 \end{align*}
 Observe that $G'$ is a split graph since $G'[W]$ forms an independent set and $G'[V]$ forms a clique.
 Let $\I:=(G',s,t,k',\ell')$ be an instance of SP-MVE on split graphs with $k'=\binom{n}{2}+k\cdot n^2$ and $\ell':=2\ell$.
 We show that $\I$ is a yes-instance if and only if $\I'$ is a yes-instance.

 Let $\I$ be a yes-instance.
 Let $S\subseteq E(G)$ be such that $G-S$ has no $st$-path of length smaller than $\ell$.
 We claim that $G'-S'$ with $$S':=\binom{V}{2} \cup \left\{\{u,w^{\{u,v\}}_j\}\mid \{u,v\}\in S, j\in[n^2]\right\}$$ does not have an $st$-path of length smaller than $\ell'$.

 Note that $|S'|\leq \binom{n}{2} + k\cdot n^2$.
 Suppose that there is an $st$-path $P'$ in $G'-S'$ with $|P'|<2\ell$.
 Then the vertices in $P'$ alternate between the vertices in $V$ and $W$.
 By construction, if $\{v,w\},\{w,u\}\in E(P')$ with $u,v\in V$ and $w\in W$, then the edge~$\{u,v\}$ is present in~$G-S$.
 Hence, consider the $st$-path~$P$ in $G-S$ obtained from $P'$ by restricting $P'$ to $V$.
 It follows that $|P|=|P'|/2<\ell$, a contradiction to the choice of~$S$.
 Thus $\I'$ is a yes-instance.

 Conversely, let $\I'$ be a yes-instance.
 Let $S'\subseteq E(G')$ be minimal such that $G'-S'$ has no $st$-paths of length smaller than $\ell'$.
 We claim that $G-S$ with $$S:=\left\{\{u,v\}\mid \exists w^{\{u,v\}}_j\in W, e\in S' : w^{\{u,v\}}_j\in e\right\}$$ does not have an $st$-path of length smaller than $\ell$.
 If $\{u,w^{\{u,v\}}_j\}\in S'$ for some $u,v\in V$ and $j\in [n^2]$, then for all $i\in[n^2]$, $w^{\{u,v\}}_i$ is incident to exactly one edge in $S'$ since $S'$ is minimal
 (otherwise $S'\backslash \{u,w^{\{u,v\}}_j\}$ is a smaller solution).
 Together with~$|S'|\leq \binom{n}{2}+k\cdot n^2<(k+1)\cdot n^2$ it follows that $|S|<k+1$.
 Suppose there is an $st$-path $P$ in $G-S$ with $|P|<\ell$.
 Then for each edge $\{u,v\}\in E(P)$, there is a $j\in[n^2]$ such that $\{u,w^{\{u,v\}}_j\},\{v,w^{\{u,v\}}_j\}\not\in S'$.
 We construct an $st$-path $P'$ in $G'-S'$ from $P$ by replacing each edge $\{u,v\}\in E(P)$ by two edges $\{u,w^{\{u,v\}}_j\},\{v,w^{\{u,v\}}_j\}\not\in S'$ for some $j\in[n^2]$.
 Then $|P'|\leq 2\cdot |P|<2\cdot \ell$, a contradiction to the choice of~$S'$.
 Thus $\I$ is a yes-instance.
\end{proof}

Note that SP-MVE can be solved on complete graphs with unit-length edges in polynomial time.
If $\ell=1$, then the instance is trivially a yes-instance.
If $\ell=2$, one edge deletion is necessary to obtain the desired distance.
If $\ell>2$, then observe that for each vertex~$v \in V \setminus \{s,t\}$ the path~$s-v-t$ has length two and all these paths are edge-disjoint.
Hence, to increase the distance between~$s$ and~$t$ to three, we have to delete~$n-1$ edges (the edge~$\{s,t\}$ and one edge in each of the~$n-2$ paths of length two).
However, with~$n-1$ edge deletions, one can delete all edges incident to~$s$ and disconnect~$s$ from~$t$, so this solution works for all~$\ell > 2$.
Thus, if~$\ell > 2$, then the instance is a yes-instance if and only if the number of edge-deletion is at least~$n-1$.

As soon as one deals with arbitrary edge lengths, however, the problem becomes NP-hard even on complete graphs.

\begin{theorem}\label{thm:cliques-np-hard}
	\SPMVE remains NP-hard on complete graphs.
\end{theorem}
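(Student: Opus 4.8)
The plan is to give a Karp reduction from \SPMVE on general (not necessarily complete) graphs, which is NP-hard~\cite{BKS95}, to \SPMVE on complete graphs. Given an instance $(G=(V,E),\tau,s,t,k,\ell)$, I would build a complete graph $G'$ on the same vertex set by adding, for every non-adjacent pair $\{u,v\}\in\binom{V}{2}\setminus E$, an edge of length~$\ell$ (any length at least~$\ell$ works), while every original edge keeps its length~$\tau(\{u,v\})$. I set $k':=k$ and $\ell':=\ell$. This is clearly computable in polynomial time, $G'$ is complete, and it has arbitrary edge lengths, which is necessary since the preceding discussion shows that the unit-length case on complete graphs is polynomial-time solvable.

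The correctness hinges on the single observation that every newly added edge has length at least~$\ell$, so no $st$-path of length strictly less than~$\ell$ can use any new edge. For the forward direction, if $S\subseteq E$ with $|S|\le k$ satisfies $\dist_{G-S}(s,t)\ge\ell$, I would reuse~$S$ in~$G'$: any $st$-path in $G'-S$ either consists solely of original edges, and is then also an $st$-path in $G-S$ of length at least~$\ell$, or it uses a new edge and has length at least~$\ell$ by the observation; hence $\dist_{G'-S}(s,t)\ge\ell$. For the backward direction, given $S'\subseteq E(G')$ with $|S'|\le k$ and $\dist_{G'-S'}(s,t)\ge\ell$, I would set $S:=S'\cap E$, so that $|S|\le|S'|\le k$. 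Every $st$-path $P$ in $G-S$ uses only edges of $E\setminus S'$ and is therefore also a path in $G'-S'$ of the same length, so its length is at least~$\ell$; thus $\dist_{G-S}(s,t)\ge\ell$. This yields the equivalence of the two instances.

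The only delicate points, and the closest thing to an obstacle, are bookkeeping rather than conceptual. First, I must ensure that a solution on the complete graph may be assumed to avoid the padding edges; this is handled cleanly in the backward direction by simply intersecting $S'$ with $E$, using that deleting a length-$\ell$ edge can neither shorten nor is ever needed on a path of length below~$\ell$. Second, I should check consistency with the standing assumption that $k$ is below the minimum $st$-edge-cut: since $G$ is a subgraph of $G'$ on the same vertex set, the minimum $st$-cut of~$G'$ is at least that of~$G$ (indeed it equals $n-1$ for the complete graph), so $k'=k$ still lies strictly below it. With these two remarks in place, the reduction is complete.
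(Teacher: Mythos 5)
Your proposal is correct and follows essentially the same reduction as the paper: pad the input graph to a complete graph by adding edges too long to appear on any $st$-path shorter than~$\ell$ (the paper assigns length~$\ell+1$ where you use~$\ell$, an immaterial difference since the threshold is ``at least~$\ell$''), keeping $k$ and $\ell$ unchanged. Your extra bookkeeping on the $st$-edge-cut assumption is a harmless addition.
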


\begin{proof}
 We reduce from SP-MVE on general graphs.
 Let $\I:=(G=(V,E),s,t,k,\ell)$ be an instance of SP-MVE (w.l.o.g.~let $G$ not contain isolated vertices).
 Let $G'$ be the graph obtained from~$G$ by adding the edge set $E':=\{\{v,w\}|\{v,w\}\not\in E\}$ and assigning length~$\tau(e):=\ell+1$ to each edge $e\in E'$.
 Observe that $G'$ is a complete graph.
 We claim that $\I':=(G',s,t,k,\ell)$ is a yes-instance of SP-MVE if and only if $\I$ is a yes-instance of SP-MVE.

 By construction, $G$ is isomorphic to $G'[E(G)]$.
 This implies that for any $S\subseteq E(G)$, there is a bijection between the set of $st$-paths in~$G-S$ and the set of $st$-paths in~$G'[E(G)]-S$.
 Observe that every $st$-path in~$G'$ using an edge in $E(G')\backslash E(G)$ has length greater than~$\ell$.
 Hence, if there is an $S\subseteq E(G)$ such that there is no $st$-path in $G-S$ of length smaller than $\ell$, then there is no $st$-path in $G'-S$ of length smaller than $\ell$, and vice versa.
\end{proof}

\section{Polynomial-time algorithms} \label{sec_polynomial}

In this section, we present three polynomial-time algorithms for special cases of SP-MVE.

We start with considering instances of SP-MVE on series-parallel graphs with $s$ and $t$ being the natural two terminals of the underlying two-terminal graph.
Here, a two-terminal graph is a triplet containing a graph and two distinct vertices of the graph (the terminals).
Every two-terminal series-parallel graph can be constructed by a sequence of parallel and serial compositions starting from single edges where the endpoints of an edge are the two terminals.
Given two two-terminal series-parallel graphs $G_1$ and $G_2$ with terminals $s_1,t_1$ and $s_2,t_2$ respectively, then
\begin{compactenum}
 \item $G$ is a \emph{serial composition} of $G_1$ and $G_2$ with terminals $s_1,t_2$ if $G$ is the disjoint union of $G_1$ and $G_2$ where $t_1$ is identified with $s_2$.
 \item $G$ is a \emph{parallel composition} of $G_1$ and $G_2$ with terminals $s,t$ if $G$ is the disjoint union of $G_1$ and $G_2$ where $s_1$ is identified with $s_2$ and $t_1$ is identified with $t_2$.
\end{compactenum}

Moreover, we can construct for each two-terminal series-parallel graph~$G$ a so-called \emph{sp-tree} in linear time~\cite{ValdesTL82,BodlaenderF01}, a binary rooted tree representing the serial and parallel composition of two-terminal series-parallel graphs to obtain~$G$.
Herein, every leaf~$\alpha$ of the sp-tree is identified with an edge, and the label~$\lambda(\alpha)$ of the leaf~$\alpha$ is the set of the endpoints of the edge.
Moreover, each inner node~$\alpha$ of the sp-tree is labeled by either $\lambda(\alpha)=\mathbf{S}$ or $\lambda(\alpha)=\mathbf{P}$, representing a serial or parallel composition, respectively.

\begin{theorem}\label{thm!spgraphs}
	\textsc{Min-Cost}-\SPMVE can be solved in $O(m\cdot\ell^2)$~time on two-terminal series-parallel graphs with $s$ and $t$ being the two terminals.
\end{theorem}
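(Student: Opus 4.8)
The plan is to solve \textsc{Min-Cost}-\SPMVE by a bottom-up dynamic program along the sp-tree, which by~\cite{ValdesTL82,BodlaenderF01} can be computed in linear time. For a node~$\alpha$ of the sp-tree, let~$G_\alpha$ be the two-terminal series-parallel graph it represents, with terminals~$s_\alpha,t_\alpha$. For every~$d\in\{0,1,\dots,\ell\}$ I would store
\[
 f_\alpha(d):=\min\bigl\{\,|S| : S\subseteq E(G_\alpha),\ \dist_{G_\alpha-S}(s_\alpha,t_\alpha)\ge d\,\bigr\},
\]
that is, the minimum number of edges to delete from~$G_\alpha$ so that every~$s_\alpha t_\alpha$-path has length at least~$d$ (a disconnecting deletion realises every value up to~$d=\ell$, so distances are effectively capped at~$\ell$). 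The answer is then read off as~$f_{\mathrm{root}}(\ell)$, since the root represents~$G$ with terminals~$s,t$. Each table has~$\ell+1$ entries and~$f_\alpha$ is non-decreasing in~$d$.

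For a leaf~$\alpha$ labelled by an edge~$e=\{s_\alpha,t_\alpha\}$ of length~$\tau(e)$, keeping~$e$ realises distance~$\tau(e)$ while deleting it (at cost one) separates the terminals; hence~$f_\alpha(d)=0$ for~$d\le\tau(e)$ and~$f_\alpha(d)=1$ otherwise. For the inner nodes I would exploit the two defining structural properties of series-parallel composition. If~$\alpha$ is a serial composition of~$G_1$ and~$G_2$, then the identified terminal is a cut vertex separating~$s_\alpha$ from~$t_\alpha$, so every~$s_\alpha t_\alpha$-path splits into an~$s_1t_1$-path through~$G_1$ and an~$s_2t_2$-path through~$G_2$ and their lengths add; this yields the min-plus recurrence
\[
 f_\alpha(d)=\min_{d_1+d_2=d}\bigl(f_1(d_1)+f_2(d_2)\bigr),
\]
where monotonicity of~$f_1,f_2$ justifies replacing the natural constraint~$d_1+d_2\ge d$ by equality. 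If~$\alpha$ is a parallel composition, then~$G_1$ and~$G_2$ share only the two terminals, so every (simple)~$s_\alpha t_\alpha$-path lies entirely inside one of them; forcing all paths to have length at least~$d$ therefore means handling the two edge-disjoint sides independently, giving~$f_\alpha(d)=f_1(d)+f_2(d)$.

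For the running time, observe that the sp-tree is a binary tree with one leaf per edge and hence has~$O(m)$ nodes. Filling a leaf table or a parallel node costs~$O(\ell)$, while a serial node costs~$O(\ell^2)$ for the min-plus convolution over the~$\ell+1$ values of~$d$. Summing over all~$O(m)$ nodes gives the claimed~$O(m\cdot\ell^2)$ bound.

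I expect the only real work to lie in verifying correctness of the two recurrences rather than in the algorithm itself: namely, proving rigorously that serial compositions make terminal distances add (via the cut-vertex argument) and that parallel compositions make them take a minimum (via confinement of simple paths to one side), and that edge deletions on the two sides never interact. The serial case additionally needs the short monotonicity argument, and the main source of subtle errors will be the consistent treatment of capping at~$\ell$ and of disconnection, so that~$f_\alpha(\ell)$ faithfully encodes ``distance at least~$\ell$, or the terminals separated''. None of this looks genuinely hard, but it is where the care must go.
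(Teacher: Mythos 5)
Your proposal is correct and follows essentially the same route as the paper: the same table (minimum deletions in~$G_\alpha$ to force terminal distance at least~$d$, capped at~$\ell$), the same leaf initialisation, the same min-plus recurrence at serial nodes and sum at parallel nodes, and the same $O(m)\cdot O(\ell^2)$ accounting. The correctness arguments you defer (the cut-vertex argument for serial composition and the confinement of simple $st$-paths to one side for parallel composition) are exactly the ones the paper spells out.
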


\begin{proof}%
 Let $(G=(V,E),s,t)$ be a two-terminal series-parallel graph with edge lengths specified by~$\tau:E\to\N$.
 Let $(T,\lambda)$ be an sp-tree for $G$, where $\lambda$ is the labeling of the nodes of $T$.
 We identify each node~$\alpha\in V(T)$ with a two-terminal series-parallel graph $G_\alpha$ induced by the subtree rooted at $\alpha$.
 Recall that if $\rho\in V(T)$ is the root of $T$, then~$G_\rho=G$.

 Let $C[\alpha,x]$ denote the minimum number of edges to delete in $G_\alpha$ such that there is no path of length smaller than~$x$ connecting the two terminals.
 Observe that such an edge deletion set exists for every $x\in \N$, and its size is upper-bounded by the size of a minimum cut disconnecting the terminals.

 \emph{Case 1:} If $\alpha\in V(T)$ is a leaf of $T$ with $\lambda(\alpha)=\{v,w\}$, then
 \[
  C[\alpha,x] = \begin{cases}
                 1 ,& \text{if $\tau(\{v,w\})<x$} , \\
                 0 ,& \text{otherwise.}
                \end{cases}
 \]
 \emph{Correctness:}
  In the graph $G_\alpha=(\{v,w\},\{\{v,w\}\})$, we have to delete the edge~$\{v,w\}$ to increase the distance between $v$ and $w$ to $x$. This is possible if and only if $\tau(\{v,w\})<x$.

 \emph{Case 2:} If $\alpha\in V(T)$ is an inner node of~$T$ with $\lambda(\alpha)=\mathbf{S}$ and children~$\alpha_1$ and~$\alpha_2$, then
 \begin{align}
  C[\alpha,x] = \min_{x'\in\{0,\ldots,x\}} (C[\alpha_1,x']+C[\alpha_2,x-x']).\label{al:case2}
 \end{align}
 \emph{Correctness:}
 Let $G_\alpha$, $G_{\alpha_1}$, and $G_{\alpha_2}$ be the graphs corresponding to nodes $\alpha$, $\alpha_1$, and $\alpha_2$ respectively.
 Let $v,w$ denote the terminals of $G_\alpha$, and let $v',u'$ and $u'',w'$ be the terminals of $G_{\alpha_1}$ and $G_{\alpha_2}$ respectively.
 Recall that $G_\alpha$ is the serial composition of $G_{\alpha_1}$ and $G_{\alpha_2}$, thus $G_\alpha$ is obtained by identifying $u'$ with $u''$ as~$u$, and setting $v:=v'$ and $w:=w'$.

 Let $S\subseteq E(G_\alpha)$ be a set of $C[\alpha,x]$ edges such that there is no $vw$-path of length smaller than~$x$ in $G_\alpha-S$.
 Since $G_\alpha$ is the serial composition, $S=S_1\cup S_2$ with $S_1\subseteq E(G_{\alpha_1})$ and $S_2\subseteq E(G_{\alpha_2})$.
 Then there is $x^*\in\{0,\ldots,x\}$ with $\dist_{G_{\alpha_1}-S_1}(v',u')\geq x^*$ and $\dist_{G_{\alpha_2}-S_2}(u'',w')\geq x- x^*$ since every $vw$-path contains~$u$.
 It follows that
 \[ C[\alpha,x] = |S| = |S_1| + |S_2| \geq \min_{x'\in\{0,\ldots,x\}} (C[\alpha_1,x']+C[\alpha_2,x-x']). \]

 Conversely, let $x^*\in \{0,\ldots,x\}$ be such that the expression in \Cref{al:case2} is minimum.
 Let $S_1\subseteq E(G_{\alpha_1})$ and $S_2\subseteq E(G_{\alpha_2})$ with $|S_1|=C[\alpha_1,x^*]$ and $|S_2|=C[\alpha_2,x-x^*]$ such that there is no $v'u'$-path of length smaller than $x^*$ in $G_{\alpha_1}-S_1$ and no $u''w'$-path of length smaller than $x-x^*$ in $G_{\alpha_2}-S_2$.
 Let $S:=S_1\cup S_2$.
 Since every $vw$-path in $G$ contains the vertex $u$, it follows that $\dist_{G-S}(v,w) = \dist_{G-S}(v,u) + \dist_{G-S}(u,w)\geq x^* + x-x^*=x$.
 It follows that
 \[ \min_{x'\in\{0,\ldots,x\}} (C[\alpha_1,x']+C[\alpha_2,x-x']) = |S_1| + |S_2| = |S| \geq C[\alpha,x]. \]

 \emph{Case 3:} If $\alpha\in V(T)$ is an inner node of $T$ with $\lambda(\alpha)=\mathbf{P}$, and children~$\alpha_1$ and~$\alpha_2$, then
 \[
  C[\alpha,x] = C[\alpha_1,x]+C[\alpha_2,x].
 \]
\emph{Correctness:}
 Let $G_\alpha$, $G_{\alpha_1}$, and $G_{\alpha_2}$ be the graphs corresponding to nodes $\alpha$, $\alpha_1$, and $\alpha_2$, respectively.
 Let $v,w$ denote the terminals of $G_\alpha$, and let $v',w'$ and $v'',w''$ be the terminals of $G_{\alpha_1}$ and $G_{\alpha_2}$ respectively.
 Recall that $G_\alpha$ is the parallel composition of $G_{\alpha_1}$ and $G_{\alpha_2}$, thus $G_\alpha$ is obtained by identifying $v'$ with $v''$ as $v$ and $w'$ with $w''$ as $w$.

 Let $S\subseteq E(G_\alpha)$ be a set of $C[\alpha,x]$ edges such that there is no $vw$-path of length smaller than~$x$ in $G_\alpha-S$.
 Since $G_\alpha$ is the parallel composition, it holds that $S=S_1\cup S_2$ with $S_1\subseteq E(G_{\alpha_1})$ and $S_2\subseteq E(G_{\alpha_2})$. %
 Observe that there is a $vw$-path of length smaller than~$x$ in~$G-S$ if and only if there is a $v'w'$-path or a $v''w''$-path of length smaller than~$x$ in~$G_{\alpha_1}-S_1$ or in $G_{\alpha_2}-S_2$.
 The observation follows immediately from the definition of parallel compositions and the fact that $v'$ is identified with $v''$ as $v$ and $w'$ is identified with $w''$ as $w$.
 It follows that

 \[ C[\alpha,x] = |S| = |S_1| + |S_2| \geq C[\alpha_1,x]+C[\alpha_2,x]. \]

 Conversely, let $S_1\subseteq E(G_{\alpha_1})$ and $S_2\subseteq E(G_{\alpha_2})$ with $|S_1|=C[\alpha_1,x]$ and $|S_2|=C[\alpha_2,x]$ such that there is no $v'w'$-path of length smaller than~$x$ in~$G_{\alpha_1}-S_1$ and no $v''w''$-path of length smaller than $x$ in $G_{\alpha_2}-S_2$.
 Let $S:=S_1 \cup S_2$.
 Following the preceding observation, we obtain
 \[ C[\alpha_1,x]+C[\alpha_2,x] = |S_1|+|S_2| = |S| \geq C[\alpha,x]. \]

 We consider $C$ as a table in the remainder.
 We fill~$C$ in post-order on $T$, that is, whenever the entries for an inner node are to be filled, the entries of the child nodes are filled before.
 By the correctness of the cases above, if $\rho\in V(T)$ denotes the root of $T$, then $C[\rho,\ell]$ denotes the minimum number of edge deletions such that there is no $st$-path in $G$ of length smaller than~$\ell$.

 Since every edge in~$G$ one-to-one corresponds to a leaf in~$T$, there are $O(m)$~nodes in~$T$.
 Hence, the table $C$ has $O(m\cdot\ell)$ entries.
 In Case 2, we have to find a minimum in $O(\ell)$ time.
 Altogether, the algorithm takes $O(m\cdot\ell^2)$~time.
\end{proof}

\begin{remark}
	With a similar dynamic programming approach one can show an algorithm solving \kmve in~$O(m \cdot k^2)$ time, see \citet[Theorem 8.4]{Stahlberg16} for details.
	Furthermore, both the~$O(m \cdot \ell^2)$-time algorithm above and the~$O(m \cdot k^2)$-time algorithm extend to the case where the edges have integral edge-deletion costs.
	This problem variant with both edge-deletion costs and edge lengths was shown to be (weakly) NP-hard on series-parallel graphs with~$s$ and~$t$ being the two terminals by \citet{BaierEHKKPSS10}.
	The two algorithms above complement this with fixed-parameter tractability with respect to each~$k$ and~$\ell$.
\end{remark}

\medskip

In \Cref{thm!split} we showed that \SPMVE with unit-length edges on split graphs remains NP-hard.
Since split graphs are of diameter at most three, \SPMVE with unit-length edges remains NP-hard on graphs of diameter at least three.
The last result of this section shows that this bound on the diameter is strict. %

\begin{proposition}\label{prop:diamtwo}
	\SPMVE with unit-length edges is linear-time solvable on graphs of diameter at most two.
\end{proposition}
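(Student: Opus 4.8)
The plan is to exploit that in a diameter-two graph $\dist_G(s,t)\le 2$, and to show that the target length $\ell$ can be met only for a bounded range of values, beyond which the answer is forced by the standing assumption that $k$ is below the minimum $st$-edge-cut size, which I denote $\mu$. First I would compute $d_0 := \dist_G(s,t) \in \{1,2\}$ with a single BFS; if $\ell \le d_0$, the empty deletion set already works and I output yes. Since $k < \mu$, no admissible deletion set can disconnect $s$ and $t$, so throughout I only reason about finite resulting distances. The uniform reformulation I use is that $\dist_{G-S}(s,t)\ge \ell$ holds exactly when $S$ hits every $st$-path of length at most $\ell-1$, i.e.\ when $S$ is a minimum \emph{length-$(\ell-1)$-bounded} $st$-edge-cut.

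Next I handle the small targets by this reformulation. For $\ell=2$ the only length-$1$ path is the edge $\{s,t\}$, so the minimum is $1$ if $s\sim t$ and $0$ otherwise. For $\ell=3$ the length-$\le 2$ paths are $\{s,t\}$ together with the paths $s$-$c$-$t$ through common neighbours $c\in C := N_G(s)\cap N_G(t)$; these paths are edge-disjoint, so the minimum is $|C|$ (plus one if $s\sim t$), computable in one pass. For $\ell=4$ the relevant object is a length-$\le 3$ bounded cut: the candidate paths all live in the four-layer structure $s$–$N_G(s)$–$N_G(t)$–$t$, and for bound three the length-bounded analogue of Menger's theorem still holds, so the minimum cut equals a maximum flow in this bounded-depth unit-capacity network. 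Exploiting the diameter-two structure and the constant depth, this computation can be carried out efficiently. In each case the instance is a yes-instance iff the computed minimum is at most $k$.

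The crux is to show that for $\ell\ge 5$ the answer is \emph{always} no. I would establish the structural lemma: if $G$ has diameter two and $\dist_{G-S}(s,t)\ge 5$, then $|S|\ge \mu$. Granting this, any $S$ with $|S|\le k<\mu$ leaves $\dist_{G-S}(s,t)\le 4$, so no target $\ell\ge 5$ is reachable. To prove the lemma I take the BFS layering $L_0,L_1,L_2,\dots$ of $G-S$ from $s$; since $\dist_{G-S}(s,t)\ge 5$, the ball $B := L_0\cup L_1\cup L_2$ excludes $t$, so the $G$-edges leaving $B$ form an $st$-edge-cut and hence number at least $\mu$. The edges leaving $B$ that survive in $G-S$ are precisely the forward BFS edges from $L_2$ to $L_3$; the diameter-two property forces, for every vertex at $(G-S)$-distance at least $3$ from $s$, a length-$\le 2$ detour in $G$ through a neighbour of $s$ that must use a deleted edge. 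The plan is to charge the surviving $L_2$-$L_3$ edges injectively to such deleted edges, which would yield $|S|\ge \mu$.

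The main obstacle is exactly this charging step: bounding the number of surviving forward BFS edges by the number of deleted edges. A naive charging overcounts, since many layer-$2$ vertices may rely on the same deleted shortcut near $s$, so making the assignment injective is where the diameter-two hypothesis has to be used carefully, presumably through a Hall-type or flow argument on the bipartite incidence between the $L_2$-$L_3$ edges and the deleted edges incident to $N_G(s)$. The remaining points—verifying the small-$\ell$ formulas, the linear-time construction and evaluation of the bounded-depth network for $\ell=4$, and assembling the cases $\ell\le d_0$, $\ell\in\{2,3,4\}$, and $\ell\ge 5$ into one linear-time procedure—I expect to be routine once the lemma is in place.
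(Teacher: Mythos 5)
Your case split and your reduction of everything to one structural claim---in a diameter-two graph, forcing $\dist_{G-S}(s,t)\ge 5$ requires $|S|\ge\mu$---is exactly the right skeleton, and your treatment of $\ell\le 4$ is sound in substance (the paper dispatches all of $\ell\le 4$ in one line by citing \citet{ItaiPS82}; your $\ell=4$ step via a length-$3$-bounded cut is what that reference provides, though your claim that it runs in linear time is not justified and the paper only claims polynomial time there). The genuine gap is the lemma for $\ell\ge 5$, which you leave unproven and explicitly flag as the ``main obstacle.'' This is not a routine charging step: it is essentially a published theorem. The paper closes this case by invoking \citet{PaynePVW12}: in any graph of diameter two, any two distinct vertices $v,w$ are joined by $\min\{\deg(v),\deg(w)\}$ pairwise \emph{edge-disjoint} paths of length at most four. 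Since every such path must contain an edge of $S$, any $S$ with $\dist_{G-S}(s,t)\ge 5$ satisfies $|S|\ge\min\{\deg(s),\deg(t)\}\ge\mu$, and conversely $\min\{\deg(s),\deg(t)\}$ deletions suffice to disconnect $s$ from $t$; hence for $\ell\ge 5$ the instance is a yes-instance if and only if $k\ge\min\{\deg(s),\deg(t)\}$, which is decidable in linear time.

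Your proposed BFS-layer charging runs into precisely the double-counting problem you yourself identify, and it is not a technicality: a surviving $L_2$--$L_3$ edge with head $v\in L_3$ guarantees (by diameter two) a deleted edge on some $G$-path of length at most two from $s$ to $v$, but that deleted edge may itself cross the cut $\partial B$ (e.g., $\{w,v\}$ with $w\in N_G(s)\cap(L_1\cup L_2)$), in which case it is already counted among the deleted edges leaving $B$; moreover many vertices of $L_3$ may rely on the \emph{same} deleted edge near $s$. Making the charge injective and disjoint from $\partial B$ is where all the content of the Payne et al.\ theorem lives, and no Hall-type or flow argument is supplied---only the hope that one exists. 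As it stands, the proposal is therefore an incomplete proof: the cases $\ell\le 4$ are fine modulo the running-time claim, but the decisive case $\ell\ge 5$ rests on an unproven lemma. The fix is either to cite the known result on edge-disjoint short paths in diameter-two graphs, as the paper does, or to actually carry out an argument of comparable depth (e.g., explicitly constructing the $\min\{\deg(s),\deg(t)\}$ edge-disjoint paths of length at most four).
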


\begin{proof}%
	\citet{ItaiPS82} proved that for $\ell\leq 4$, SP-MVE with unit-length edges is solvable in polynomial time.
	Hence, it remains to consider the case where $\ell\geq 5$.
 
	\citet{PaynePVW12} showed that in any graph $H$ of diameter two, for each pair of distinct vertices $v,w\in V(H)$, there are $\min\{\deg(v),\deg(w)\}$ many edge-disjoint paths of length at most four.
	Hence, to achieve a distance of five or more between~$s$ and~$t$ we have to delete~$\min\{\deg(s),\deg(t)\}$ edges, which is sufficient to cut~$s$ from~$t$.
	Thus, any instance $(G,s,t,k,\ell)$ with $\ell\geq 5$ and $G$ being a graph of diameter two is a yes-instance if and only if $k\geq \min\{\deg(s),\deg(t)\}$.
	This can be decided in linear time.
\end{proof}

Observe that each connected component of a \emph{cograph} (a graph without an induced~$P_4$) has diameter two.
Note that threshold graphs are cographs.
Thus, the preceding result also shows that \SPMVE with unit-length edges is linear-time solvable on cographs and threshold graphs.

\section{Algorithms for some NP-hard cases}\label{sec_fpt}

In this section, we present fixed-parameter and approximation algorithms. %
First, we consider bounded-degree graphs.
Here, the basic observation is that the maximum vertex degree~$\Delta$ of a graph upper-bounds the number of deleted edges for SP-MVE: a budget of~$\Delta$ would allow to disconnect $s$ from~$t$ by deleting all edges incident to~$s$.

\begin{proposition}\label{prop:xp-delta}
	\SPMVE can be solved in~$O(m^{\Delta-1}\allowbreak (m+ n \log n))$ time.
\end{proposition}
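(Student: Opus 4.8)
The plan is a direct brute-force search over all candidate deletion sets, rendered polynomial by the degree bound. First I would pin down how small the budget $k$ must be. By the standing assumption from the problem setup, $k$ is strictly smaller than the size of any $st$-edge-cut. The set of all edges incident to~$s$ is itself an $st$-edge-cut, of size $\deg_G(s)\le\Delta$. Since $k$ is an integer with $k<\deg_G(s)\le\Delta$, we conclude $k\le\Delta-1$; in particular every feasible deletion set removes at most $\Delta-1$ edges. (Symmetrically one could use $\deg_G(t)$, but either bound suffices.)

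Given this, I would simply enumerate every edge subset $S\subseteq E$ with $|S|\le\Delta-1$. The number of such subsets is $\sum_{i=0}^{\Delta-1}\binom{m}{i}=O(m^{\Delta-1})$, since for fixed $\Delta$ the top term dominates. For each candidate~$S$, I would delete the edges of~$S$ and compute $\dist_{G-S}(s,t)$ with a single run of Dijkstra's algorithm in $O(m+n\log n)$ time, reporting a yes-instance exactly when some enumerated~$S$ satisfies $\dist_{G-S}(s,t)\ge\ell$.

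Correctness is immediate once the budget reduction is in hand: because every feasible solution has size at most $\Delta-1$, the enumeration ranges exhaustively over all deletion sets that could possibly witness a yes-instance, so the original instance is a yes-instance if and only if at least one of the tested sets raises the $st$-distance to~$\ell$ or more. Multiplying the $O(m^{\Delta-1})$ subsets by the $O(m+n\log n)$ per-subset shortest-path computation yields the claimed running time.

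The one genuinely load-bearing step — and the only place where anything subtle happens — is the reduction $k\le\Delta-1$. Without it, the naive enumeration over all subsets of size at most~$k$ would not be bounded by a fixed power of~$m$, and the XP claim with respect to~$\Delta$ would fail. Everything after that is routine: exhaustive enumeration paired with a standard shortest-path subroutine.
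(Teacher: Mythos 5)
Your approach is essentially the paper's own: bound the budget by $k\le\Delta-1$ via the edges incident to~$s$ forming an $st$-edge-cut, then brute-force over deletion sets and check each with Dijkstra. One slip, though: you enumerate \emph{all} subsets of size at most $\Delta-1$ and declare a yes-instance whenever any of them raises the distance, but a set of size strictly between $k$ and $\Delta-1$ is not a feasible solution, so your ``if and only if'' fails in the backward direction and the algorithm as written can return false positives whenever $k<\Delta-1$. The fix is immediate and costs nothing: enumerate only subsets of size at most $k$ (which, by your own bound, is at most $\Delta-1$, so the $O(m^{\Delta-1})$ count and the stated running time are unaffected) --- this is exactly what the paper does.
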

{\begin{proof}
	Recall that we assume~$k$ to be smaller than the 
maximum degree~$\Delta$
as otherwise we could simply delete all edges incident to~$s$.
	The straightforward algorithm branching into all~$O(m^k)$ cases to delete at most~$k$ edges and checking with Dijkstra's shortest path algorithm whether the distance between~$s$ and~$t$ is high enough runs in~$O(m^k (m + n \log n)) = O(m^{\Delta-1} (m + n \log n))$ time.
\end{proof}}
The question whether one can replace~$m^{\Delta-1}$ by~$f(\Delta) \cdot m^{O(1)}$ for some function~$f$%
, that is, whether SP-MVE is not only in~XP  (as shown by Proposition~\ref{prop:xp-delta}) but also fixed-parameter tractable with respect to~$\Delta$, remains open.

\citet{GolovachT11} used a search tree algorithm to show that SP-MVE is fixed-parameter tractable when combining the parameters number~$k$ of removed edges and minimum $st$-path length~$\ell$ to be achieved.
We next state the result and describe the search tree since we will adapt it in the following.

\begin{proposition}[\citet{GolovachT11}]\label{prop:search-tree}
	\SPMVE can be solved in~$O((\ell-1)^k \cdot (n\log n + m))$ time.
\end{proposition}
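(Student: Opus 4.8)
The plan is to prove \Cref{prop:search-tree} by a bounded search tree on the shortest $st$-paths, exploiting the fact that any solution must ``hit'' every short $st$-path. The key structural observation is the following: an edge set $S$ with $|S| \le k$ is a solution (that is, $\dist_{G-S}(s,t) \ge \ell$) if and only if $S$ intersects the edge set of every $st$-path of length at most $\ell - 1$. Indeed, if some $st$-path $P$ of length at most $\ell-1$ survives in $G-S$, then $\dist_{G-S}(s,t) \le \ell-1 < \ell$; conversely, if $\dist_{G-S}(s,t) < \ell$, then a shortest $st$-path in $G-S$ is an $st$-path of length at most $\ell - 1$ that $S$ fails to hit.

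With this reformulation, the algorithm proceeds as follows. First I would compute, using Dijkstra's algorithm in $O(n \log n + m)$ time, a shortest $st$-path $P$ in the current graph; if $\dist(s,t) \ge \ell$ we are done (the empty deletion so far extends to a solution), and if the current budget has dropped to $0$ while $\dist(s,t) < \ell$ we report failure on this branch. Otherwise $P$ has length at most $\ell - 1$, hence consists of at most $\ell - 1$ edges. Since any solution must delete at least one edge of $P$, I would branch into (at most) $\ell - 1$ subcases, in each one deleting a distinct edge $e \in E(P)$ from $G$, decrementing the budget, and recursing. A solution exists if and only if some leaf of this search tree succeeds.

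For the running-time analysis, each internal node of the search tree branches into at most $\ell - 1$ children and decreases the remaining budget by one, so the depth is at most $k$ and the tree has at most $(\ell-1)^k$ leaves, giving $O((\ell-1)^k)$ nodes overall. At each node the dominant cost is the single shortest-path computation in $O(n \log n + m)$ time (deleting an edge and updating the graph is subsumed by this bound). Multiplying, the total running time is $O((\ell-1)^k \cdot (n\log n + m))$, as claimed. Correctness of the branching follows by a standard induction on the budget: the branching rule is exhaustive because every solution contains at least one edge of the chosen short path $P$, so at least one branch fixes a correct first deletion, and the inductive hypothesis handles the remaining budget $k-1$ on the reduced instance.

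I do not expect a genuine obstacle here, as this is the classical hitting-set-style search tree for length-bounded cuts; the only point requiring mild care is the base-case accounting, namely ensuring that the ``$\dist(s,t) \ge \ell$'' test is performed at every node (so that branches which already achieve the target terminate successfully even before the budget is exhausted) and that the length-at-most-$(\ell-1)$ path genuinely has at most $\ell-1$ edges under unit-length edges, which is where the dependence on $\ell$ rather than $n$ in the branching factor comes from.
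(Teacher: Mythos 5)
Your proof is correct and follows essentially the same route as the paper: a depth-$k$ search tree that repeatedly computes a shortest $st$-path via Dijkstra and branches on deleting one of its at most $\ell-1$ edges, yielding $O((\ell-1)^k \cdot (n\log n + m))$ time. One small remark: your closing caveat about unit-length edges is unnecessary, since the proposition holds for general positive integer edge lengths --- any $st$-path of length less than $\ell$ still has at most $\ell-1$ edges because each edge contributes at least one to the length.
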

\begin{proof}
	We employ a simple depth-bounded search tree:
	the basic idea is to search for a shortest $st$-path and to ``destroy'' it by deleting one of the edges (trying all possibilities).
	This is repeated until every shortest $st$-path has length at least~$\ell$.
	For each such shortest path, we branch into  at most~$\ell-1$ possibilities to delete one of its edges, and the depth of the corresponding search tree is at most~$k$ (our ``deletion budget'') since otherwise we cannot find a solution with at most~$k$ edge deletions.
	The correctness is obvious.
	Hence, we arrive at a search tree of size at most $(\ell-1)^k$ where in each step we need to compute a shortest path.
	Using Dijkstra's shortest algorithm, this can be done in~$O(n \log n + m)$ time.
	The overall running time is thus~$O((\ell-1)^k \cdot (n\log n + m))$.
\end{proof}
Using the search tree described in the proof of~\cref{prop:search-tree} to destroy all paths of length at most~$2^{O(\sqrt{\log n})}$ yields the following.

\begin{corollary}\label{cor:param-approx-k}
	For any constant~$c$, \kmve with unit-length edges can be approximated within a factor of $n/2^{c \cdot \sqrt{\log n}}$ in~$O(2^{k^2}k (n \log n + m) + n^{c^2 + 3})$ time.
\end{corollary}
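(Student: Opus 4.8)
The corollary combines the FPT search tree with approximation. Let me understand what's being claimed.\textbf{Approach.} The plan is to combine the parameterized search-tree algorithm of \cref{prop:search-tree} with a threshold argument that separates ``short'' paths (which we destroy exactly) from ``long'' paths (which contribute to the approximation guarantee automatically). The key tension is that the search tree runs in~$(\ell-1)^k$ time, which is only useful when the target length~$\ell$ is small; but \kmve asks for the \emph{maximum} achievable length given~$k$ deletions, and that maximum could be as large as~$\Theta(n)$. So I would not run the algorithm to the true optimum directly. Instead I would fix a length threshold~$L := 2^{c\cdot\sqrt{\log n}}$ and use the search tree only to guarantee that every surviving $st$\dash path has length at least~$L$.

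\textbf{Key steps.} First, run the search-tree procedure from \cref{prop:search-tree} with target length~$\ell = L = 2^{c\cdot\sqrt{\log n}}$: repeatedly find a shortest $st$\dash path of length less than~$L$ and branch on deleting one of its at most~$L-1$ edges, to depth at most~$k$. If the procedure succeeds within budget~$k$, it returns an edge set~$S$ of size at most~$k$ such that $\dist_{G-S}(s,t) \ge L$, so $\val(I,S) \ge L = 2^{c\cdot\sqrt{\log n}}$. The running time is~$(L-1)^k \cdot O(n\log n + m) = 2^{c k \sqrt{\log n}} \cdot O(n \log n + m)$; to express this as an fpt running time in~$k$ I would bound~$ck\sqrt{\log n}$ using the AM\dash GM\dash type inequality~$ck\sqrt{\log n} \le \tfrac{1}{2}(c^2 k^2 + \log n)\le k^2 + \log n$ (for~$c$ constant and~$n$ large enough that $c^2 \le 2$ after absorbing constants, or more carefully $ck\sqrt{\log n}\le k^2 + c^2\log n/4$), so that $2^{ck\sqrt{\log n}} \le 2^{k^2}\cdot n^{c^2/4}$, yielding the claimed~$O(2^{k^2} k (n\log n + m) + n^{c^2+3})$ bound after folding in the cost of shortest-path computations and the branching factor~$k$. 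Second, bound the optimum trivially: since the graph has~$n$ vertices, any $st$\dash path has length at most~$n-1$, so~$\opt(I) \le n$. Third, combine: the ratio is $r(I,S) = \opt(I)/\val(I,S) \le n / 2^{c\sqrt{\log n}}$, exactly the target factor.

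\textbf{Handling the failure case.} If the search tree fails to reach length~$L$ within budget~$k$, that means no deletion set of size~$k$ achieves distance~$L$, hence~$\opt(I) < L = 2^{c\sqrt{\log n}}$. In this regime I would instead run \cref{prop:search-tree} \emph{iteratively} to pin down the exact optimum: since~$\opt(I) < L$, I can search for the largest~$\ell' \le L$ for which a size\dash$k$ solution exists, and the running time of that exact computation is again bounded by~$(L-1)^k \cdot O(n \log n + m)$, which fits inside the stated time bound. In this case the algorithm returns an optimal solution, so~$r(I,S) = 1$, trivially within the approximation factor. I would present these two cases uniformly by observing that the single algorithm ``run the search tree up to depth~$k$ destroying all paths shorter than~$L$'' either certifies $\val(I,S) \ge L$ (so the ratio is at most $n/L$) or certifies $\opt(I) < L$ and in the latter case the partial search tree has already explored enough to report the exact optimum.

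\textbf{Main obstacle.} The technically delicate part is the running-time arithmetic: converting~$(L-1)^k = 2^{\Theta(k\sqrt{\log n})}$ into a clean~$f(k)\cdot\poly(n)$ fpt bound of the advertised form~$O(2^{k^2}k(n\log n+m) + n^{c^2+3})$. One must split the exponent via~$k\sqrt{\log n} \le \tfrac12(k^2 + \log n)$ carefully, track the constant~$c$ through both the~$2^{k^2}$ factor and the polynomial~$n^{c^2+3}$ factor, and verify that the~$+\,n^{c^2+3}$ additive term (rather than a multiplicative one) genuinely dominates the~$n$\dash dependence coming from the~$2^{(c^2/4)\log n} = n^{c^2/4}$ contribution together with the shortest-path cost. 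Everything else---correctness of the search tree, the~$\opt \le n$ bound, and the ratio computation---follows directly from \cref{prop:search-tree} and elementary estimates.
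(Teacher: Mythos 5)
Your algorithmic approach is exactly the paper's: fix the threshold $g(n)=2^{c\sqrt{\log n}}$, use the search tree of \cref{prop:search-tree} either to certify a solution of value at least~$g(n)$ (then $\opt\le n-1$ gives ratio at most $n/g(n)$) or to compute an exact optimum below the threshold (ratio~$1$). The two-case structure, the choice of threshold, and the correctness argument coincide with the paper's proof; your fallback of iterating the search tree over targets $\ell'\le L$ in the failure case is the same as the paper's loop over $\ell=1,2,\ldots,g(n)$, just run in the opposite order.

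The one place where your proposal does not deliver the statement is the running-time arithmetic, which you yourself flag as the main obstacle. Bounding the exponent $k\log g(n) = ck\sqrt{\log n}$ by AM--GM produces a \emph{sum} of two terms, hence a \emph{product} bound such as $g(n)^k \le 2^{k^2}\cdot n^{c^2/4}$, and a product of this shape cannot be rewritten in the advertised additive form with the advertised constants: splitting on whether $k^2\le \log n$ turns $2^{k^2}n^{c^2/4}$ into $2^{(1+c^2/4)k^2}+n^{1+c^2/4}$, so the coefficient of $k^2$ in the exponent degrades (and your other variant, $\tfrac12(c^2k^2+\log n)$, degrades it to $c^2k^2$). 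What the paper does instead is bound the exponent by a \emph{max} rather than a sum, via a case distinction: if $k>\log g(n)$, then $g(n)^k = 2^{k\log g(n)} \le 2^{k^2}$; if $k\le\log g(n)$, then $g(n)^k \le 2^{(\log g(n))^2} = n^{c^2}$. Hence $g(n)^k\le\max\bigl\{2^{k^2},\,n^{c^2}\bigr\}$, which immediately yields the claimed bound $O\bigl(2^{k^2}k(n\log n+m)+n^{c^2+3}\bigr)$. So replace your AM--GM step by the elementary inequality $ab\le\max\{a,b\}^2$ applied to $a=k$, $b=c\sqrt{\log n}$, and your proof goes through; as written, it only establishes a bound of the form $2^{O(k^2)}\cdot\poly(n)$, not the specific one claimed.
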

{\begin{proof}
We employ the search tree algorithm behind
\cref{prop:search-tree};  it has size~$O((\ell-1)^k)$.
The idea now is
to either compute an optimal solution in fpt-time or to derive
the stated approximation in polynomial time.

Our parameterized approximation algorithm works as follows.
Trying $\ell =1, 2, \ldots , g(n)$ (where $g(n)$ is determined below) we employ the search tree
to detect whether there is an optimal solution of length
smaller than~$g(n)$. Namely, if the search tree for some
$\ell$-value says no, then we know that we found an optimal
solution with the previous search tree and output this.
Otherwise, we reach $\ell = g(n)$ and thus,
since the optimal value is
at most~$n-1$, this means that we have a factor-$n/g(n)$-approximation.

Overall, this procedure has at most $g(n)$ iterations and each has a running time of $O( g(n)^k \cdot (n \log n + m) )$.
It remains to determine for which (maximum) function~$g(n)$ this still yields fpt running time for parameter~$k$.
First, if $k>\log (g(n))$, then $g(n)^k = 2 ^ {k \cdot \log g(n)}$ can be upper-bounded by~$2^{k^2}$ and we are done.
Second, if $k\leq\log (g(n))$, then 
we have that $g(n)^k \leq g(n)^{\log (g(n))} = 2^{(\log (g(n))^2}$.
The latter term is polynomial if and only if $g(n) = 2^{O(\sqrt{\log n})}$.
More precisely, if for any constant~$c$ we have~$g(n) = 2^{c \cdot \sqrt{\log n}}$, then we get the bound~$2^{(\log (g(n))^2} \le n^{c^2}$.
In total the running time in this second case is bounded by~$O(n^{c^2 + 3})$.
\end{proof}}
By deleting every edge on too short $st$-paths, we obtain an~$\ell$-approximation.

\begin{proposition}\label{th:lapprox}
	\meb can be approximated within a factor of~$\ell$ in~$O(n^2 \log n + nm)$ time.
\end{proposition}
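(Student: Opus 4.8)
The plan is to adapt the depth-bounded search tree behind \cref{prop:search-tree} into a greedy, non-branching procedure: repeatedly compute a shortest $st$-path and, whenever its length is below~$\ell$, delete \emph{all} of its edges, stopping once the shortest $st$-path has length at least~$\ell$. Concretely, starting from $G_0 := G$ and $S := \emptyset$, in iteration~$i$ I would run Dijkstra's algorithm on the current graph~$G_{i-1}$ to obtain a shortest $st$-path~$P_i$; if $P_i$ has length at least~$\ell$ the algorithm returns~$S$, and otherwise I add all edges of~$P_i$ to~$S$ and set $G_i := G_{i-1} - E(P_i)$. Feasibility is immediate from the stopping condition: upon termination, the shortest $st$-path in~$G - S$ has length at least~$\ell$.

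For the approximation ratio I would analyze the sequence $P_1, \dots, P_q$ of deleted paths. First, these paths are pairwise edge-disjoint, since every edge of~$P_i$ is removed before~$P_{i+1}$ is computed. Second, each~$P_i$ is a genuine $st$-path of~$G$ (all its edges belong to~$G$) of length strictly below~$\ell$; since all edge lengths are positive integers, $P_i$ consists of at most~$\ell - 1$ edges, giving $|S| = \sum_{i=1}^q |E(P_i)| \le (\ell - 1)\,q$. The decisive step is to lower-bound the optimum by~$q$: any feasible solution~$S'$ must contain at least one edge of each~$P_i$, since otherwise~$P_i$ would survive intact in~$G - S'$ and witness $\dist_{G - S'}(s,t) < \ell$; as the~$P_i$ are edge-disjoint, this forces $|S'| \ge q$, hence $\opt \ge q$. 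Combining the two bounds yields $|S| \le (\ell - 1)\,q \le (\ell - 1)\,\opt < \ell\cdot\opt$, i.e.\ an~$\ell$-approximation.

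It remains to account for the running time, which I expect to be the main point to get right. Each iteration performs a single shortest-path computation, costing $O(m + n\log n)$ with Dijkstra's algorithm. To bound the number~$q$ of iterations I would again invoke $\opt \ge q$ together with the fact that deleting all edges incident to~$s$ disconnects~$s$ from~$t$, so that $\opt$ is at most the size of a minimum $st$-edge-cut and in particular $q \le \opt \le \min\{\deg(s),\deg(t)\} \le n - 1$. Hence the algorithm performs $O(n)$ Dijkstra runs, for a total of $O(n(m + n\log n)) = O(n^2\log n + nm)$ time. The subtleties worth double-checking are that the conversion from length to number of edges ($\text{length} < \ell \Rightarrow$ at most $\ell-1$ edges) genuinely uses the positive-integer edge-length assumption, and that the lower bound $\opt \ge q$ relies only on edge-disjointness of the~$P_i$, not on their being shortest in the original graph.
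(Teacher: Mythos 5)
Your proposal is correct and follows essentially the same approach as the paper's proof: greedily delete all edges of each too-short shortest $st$-path, bound the solution size by (at most) $\ell$ edges per iteration, and lower-bound the optimum by the number of iterations via edge-disjointness of the deleted paths. Your write-up is in fact slightly more careful than the paper's, making the edge-disjointness argument for $\opt \ge q$ explicit and sharpening the per-path edge count to $\ell-1$ using positive integer lengths.
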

{\begin{proof}
	Let $I=(G=(V,E),\ell,s,t,\tau)$ be an instance of \meb.
	We repeat the following algorithm until the shortest $st$-path has length at least~$\ell$.
	Set $G':=G$ and let $P$ be a shortest $st$-path in $G'$.
	If the length~$\tau(P)$ of~$P$ is less than~$\ell$, then set $G' := G'-E(P)$ and proceed.
	Denote by $i$ the number of iterations the algorithm realizes.
	Let $E''$ be the set of all edges of the $i$ shortest paths removed from~$G$.
	The size of $E''$ is $|E''|\leq i\ell$ since at each step at most $\ell$ edges are deleted.
	Moreover, $\opt(I)\geq i$ since an optimal solution contains at least one edge of each of these $i$ paths.
	The number of iteration is at most~$n$ and each iteration can be done in~$O(n \log n + m)$ time.
\end{proof}}

\citet[Corollary 3.14]{BaierEHKKPSS10} provided a $b$-approximation algorithm for \meb running in~$O(b \cdot n \cdot m)$ time.
Observe that our approximation algorithm in \Cref{th:lapprox} provides a weaker approximation factor but a faster running time.

Combining the previous approximation algorithm  with a tradeoff between running time and approximation factor~\cite[Lemma 2]{BazganCNS14}, we obtain the following.
\begin{corollary}
	For every increasing function~$r$, \meb is parameterized $r(n)$-approximable with respect to the parameter~$\ell$.
\end{corollary}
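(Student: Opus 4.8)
The plan is to apply the tradeoff lemma of \citet[Lemma 2]{BazganCNS14} to the polynomial-time $\ell$\dash approximation from \Cref{th:lapprox}. The content of that lemma is a simple dichotomy, which I would spell out as follows. Fix an increasing function~$r$ (tending to infinity) and let~$I$ be an instance of \meb on~$n$ vertices with parameter~$\ell$. The first step is to compare~$r(n)$ with~$\ell$ and branch on the outcome.

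If~$r(n) \ge \ell$, then the $\ell$\dash approximation of \Cref{th:lapprox} already returns, in~$O(n^2 \log n + nm)$ time, a solution whose ratio is at most~$\ell \le r(n)$; nothing further is needed in this case. If instead~$r(n) < \ell$, I would exploit that~$r$ is increasing and unbounded: the set~$\{n' \in \N : r(n') < \ell\}$ is then a finite initial segment of~$\N$, so~$n$ is bounded from above by a value~$N(\ell)$ depending only on the parameter~$\ell$. Consequently the entire instance has size bounded by a function of~$\ell$, and \meb can be solved optimally by brute force over all edge subsets in time depending only on~$\ell$, that is, in fpt time with respect to~$\ell$. An exact solution is in particular an $r(n)$\dash approximate one.

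Combining the two cases, the resulting algorithm runs in~$f(\ell) \cdot \poly(n)$ time for a computable~$f$ and always outputs an $r(n)$\dash approximate solution, which is exactly a parameterized $r(n)$\dash approximation with respect to~$\ell$.

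The step requiring the most care — and the reason the tradeoff succeeds at all — is the observation that the polynomial-time approximation factor supplied by \Cref{th:lapprox} is precisely the parameter~$\ell$ itself, rather than a function of~$n$; this matching is the hypothesis under which \citet[Lemma 2]{BazganCNS14} applies and is what turns the branch~$r(n) < \ell$ into a bound on~$n$. Once this is noticed, the remainder is routine bookkeeping: verifying that the brute-force running time in the bounded case depends on~$\ell$ alone, and that~$r$ being unbounded guarantees~$N(\ell)$ is finite for every fixed~$\ell$.
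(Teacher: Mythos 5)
Your proposal is correct and takes essentially the same route as the paper: the paper obtains the corollary by combining the $\ell$\dash approximation of \Cref{th:lapprox} with the running-time/approximation tradeoff of \citet[Lemma~2]{BazganCNS14}, and your case distinction (if~$r(n)\ge\ell$, run the polynomial-time $\ell$\dash approximation; if~$r(n)<\ell$, then~$n$ is bounded by a function of~$\ell$ and exhaustive search over edge subsets runs in fpt time with respect to~$\ell$) is precisely the argument behind that lemma, merely spelled out instead of cited. The one hypothesis you rightly make explicit is that~$r$ is unbounded (as the cited lemma also assumes), since otherwise the set~$\{n' : r(n')<\ell\}$ need not be finite.
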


\paragraph{Parameter feedback edge set number.}
We next provide a linear-size problem kernel for SP-MVE parameterized by the feedback edge set number.
An edge set~$F \subseteq E$ is called \emph{feedback edge set} for a graph~$G = (V,E)$ if~$G-F$ is a tree or a forest.
The feedback edge set number of~$G$ is the size of a minimum feedback edge set.
Note that if~$G$ is connected, then the feedback edge set number equals~$m-n+1$.
Computing a spanning tree, one can determine a minimum feedback edge set in linear time.
Hence, we assume in the following that we are given a feedback edge set~$F$ with~$|F|=f$ for our input instance~$(G=(V,E),k,\ell,s,t,\tau)$.
We start with two simple data reduction rules dealing with degree-one and degree-two vertices.

\begin{rrule}\label{rule:degreeOneVertices}
	Let~$(G=(V,E),k,\ell,s,t,\tau)$ be an SP-MVE instance and let~$v \in V \setminus \{s,t\}$ be a vertex of degree one. Then, delete~$v$.
\end{rrule}
The correctness of \cref{rule:degreeOneVertices} is obvious as no shortest path uses a degree-one vertex.
We deal with degree-two vertices as follows.

\begin{rrule}\label{rule:degreeTwoVertices}
	Let~$(G=(V,E),k,\ell,s,t,\tau)$ be an SP-MVE instance and let~$v \in V \setminus \{s,t\}$ be a vertex of degree two with~$N_G(v) = \{u,w\}$ and~$\{u,w\} \notin E$. %
	Then add the edge~$\{u,w\}$ with the length~$\tau(\{u,w\}) := \tau(\{u,v\})+\tau(\{v,w\})$ and delete~$v$.
\end{rrule}
The correctness of \cref{rule:degreeTwoVertices} follows from the fact that on an induced path at most one edge will be deleted and it does not matter which one will get deleted.
Applying both rules exhaustively can be done in linear time and leads to the following problem kernel.

\begin{theorem} \label{thm:fes-linear-kernel}
	\SPMVE admits a linear-time computable problem kernel with~$5f+2$ vertices and~$6f+2$ edges.
\end{theorem}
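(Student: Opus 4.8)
The plan is to take the two data reduction rules as the kernelization procedure itself and then bound the size of the reduced instance in terms of~$f$. First I would apply \cref{rule:degreeOneVertices} and \cref{rule:degreeTwoVertices} exhaustively; by the correctness arguments already given this yields an equivalent instance, and since both rules delete a vertex together with its incident edges (\cref{rule:degreeTwoVertices} additionally inserting one new edge) neither the feedback edge set number~$f$ nor the parameters~$k,\ell$ can increase. Exhaustive application takes linear time, and a minimum feedback edge set~$F$ with~$|F|=f$ can be read off from a spanning forest in linear time, so the whole reduction is linear-time computable. It therefore remains to prove that the reduced graph~$G=(V,E)$ has at most~$5f+2$ vertices and~$6f+2$ edges. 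I may assume that~$G$ is connected: only the connected component containing~$s$ and~$t$ is relevant and its feedback edge set number is at most~$f$.

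For the size bound I would fix a minimum feedback edge set~$F$ and work with the spanning tree~$T:=G-F$. Two structural properties drive the count. By \cref{rule:degreeOneVertices} every vertex in~$V\setminus\{s,t\}$ has degree at least two; by exhaustiveness of \cref{rule:degreeTwoVertices} every such vertex~$v$ of degree exactly two has its two neighbors~$u,w$ joined by an edge, i.e.\ lies in a triangle. Let~$L$ be the set of endpoints of edges in~$F$, so~$|L|\le 2f$. Every leaf of~$T$ is either~$s$, $t$, or incident to an edge of~$F$ (otherwise it would have degree one in~$G$, contradicting \cref{rule:degreeOneVertices}); hence~$T$ has at most~$2f+2$ leaves, and consequently at most~$2f$ branch vertices (vertices of degree at least three in~$T$), using the standard fact that in a tree the number of degree-$\ge 3$ vertices is at most the number of leaves minus two.

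The crux is to bound the degree-two vertices that \cref{rule:degreeTwoVertices} could not eliminate. Consider such a~$v\in V\setminus\{s,t\}$ with neighbors~$u,w$ and~$\{u,w\}\in E$, and suppose both edges~$\{u,v\},\{v,w\}$ lie in~$T$; then the triangle~$u$-$v$-$w$ has two edges in the acyclic graph~$T$, forcing its base edge~$\{u,w\}$ into~$F$. I would charge~$v$ to this feedback edge and argue injectivity: if two such vertices~$v_1,v_2$ charged to the same edge~$\{u,w\}$, both would be adjacent to exactly~$u$ and~$w$ via tree edges, producing the $4$-cycle~$u$-$v_1$-$w$-$v_2$-$u$ in the forest~$T$, a contradiction. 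Thus at most~$f$ degree-two vertices have both incident edges in~$T$; every remaining degree-two vertex other than~$s,t$ is incident to an edge of~$F$ and hence lies in~$L$. Partitioning~$V$ into~$L$ and~$V\setminus L$ now avoids double counting: $|L|\le 2f$, while in~$V\setminus L$ every vertex has the same degree in~$T$ as in~$G$, so it is either~$s$ or~$t$ (at most two), a charged degree-two triangle vertex (at most~$f$), or a branch vertex of~$T$ (at most~$2f$). Summing gives~$|V|\le 5f+2$, and since~$|E|=|V|-1+f$ this yields~$|E|\le 6f+1\le 6f+2$. I expect the injective charging of the ``irreducible'' degree-two triangle vertices to distinct feedback edges to be the main obstacle, as it is the only place where the simple-graph restriction of \cref{rule:degreeTwoVertices} (which forbids contracting a degree-two vertex whose neighbors are already adjacent) has to be paid for.
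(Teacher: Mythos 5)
Your proof is correct and takes essentially the same route as the paper: exhaustively apply the two reduction rules, then bound the reduced graph by counting leaves and branch vertices of the spanning tree~$G-F$ and charging the surviving degree-two vertices to feedback edges via their triangles. Your explicit injective charging (two degree-two vertices charged to the same feedback edge~$\{u,w\}$ would force a $4$-cycle inside the tree) actually spells out a detail the paper's proof only asserts, namely that at most~$f$ such degree-two vertices can remain.
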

{\begin{proof}
	Let~$(G=(V,E),k,\ell,s,t,\tau)$ be the input instance of SP-MVE.
	First, we exhaustively apply \cref{rule:degreeOneVertices,rule:degreeTwoVertices}.
	It remains to upper-bound the size of the reduced graph~$G'$.
	To this end, first observe that~$G'$ contains at most~$f$ degree-two vertices as every degree-two vertex that is not deleted by \cref{rule:degreeTwoVertices} has two neighbors that are adjacent to each other and thus induces together with its neighbors a cycle.
	It remains to upper-bound the number of vertices with degree at least three.
	To this end, let~$r$ denote the number of leaves in the tree~$G' - F$.
	Thus, $G'-F$ contains at most~$r-2$ vertices of degree at least three.
	Due to \cref{rule:degreeOneVertices}, $G'$ contains at most two degree-one vertices ($s$ and~$t$) and, hence, $r \le 2f+2$.
	Furthermore, there are at most~$2f$ degree-three vertices in~$G'$ that are incident to an edge in~$F$.
	Hence, $G'$ contains at most~$4f+2$ vertices of degree at least three.
	In total, $G'$ contains at most~$5f+2$ vertices and, thus, $6f+2$ edges. 
	
	We now discuss the running time.
	To apply the rules, start with sorting the vertices by degree in non-decreasing order.
	Since all degrees are smaller than~$n$, the sorting can be done in~$O(n)$ time using e.\,g. Bucket sort.
	Then, deleting all degree-one vertices and updating their neighbors' degrees can be done in linear time.
	Similarly, once \cref{rule:degreeOneVertices} is no more applicable, the degree-two vertices can be dealt with in similar fashion. 
	Note that applying \cref{rule:degreeTwoVertices} does not change the degrees of the neighbors of the degree-two vertex. 
	Thus, for each degree-two vertex removing it and adding the extra edge can be done in constant time.
	Hence, the overall time to apply both rules is linear.
\end{proof}}

By simply trying all possibilities to delete edges in the problem kernel and checking with Dijkstra's algorithm the distance between~$s$ and~$t$, we obtain the following.

\begin{corollary}
	\SPMVE can be solved in~$O(2^{6f} (n\log n + m))$ time where~$f$ is the feedback edge set number.
\end{corollary}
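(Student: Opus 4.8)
The plan is to combine the kernelization result from \Cref{thm:fes-linear-kernel} with the brute-force branching strategy already used in \Cref{prop:xp-delta}. The key insight is that after applying \Cref{rule:degreeOneVertices,rule:degreeTwoVertices} exhaustively, the reduced instance has at most $6f+2$ edges, so the total number of edges available for deletion is bounded by a function of $f$ alone rather than by $m$. This is exactly what is needed to convert the XP-type enumeration of \Cref{prop:xp-delta} into a genuine fixed-parameter algorithm.

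First I would invoke \Cref{thm:fes-linear-kernel} to compute, in linear time, a problem kernel $(G',k,\ell,s,t,\tau)$ with at most $5f+2$ vertices and at most $6f+2$ edges. Since the kernelization is an equivalence-preserving reduction, it suffices to solve SP-MVE on the kernel. Next I would enumerate every candidate deletion set $S \subseteq E(G')$ of size at most $k$. Because $G'$ has at most $6f+2$ edges, the number of subsets of $E(G')$ is at most $2^{6f+2} = O(2^{6f})$, and this trivially bounds the number of size-$\le k$ subsets as well. For each candidate $S$, I would delete the edges and run Dijkstra's shortest-path algorithm on $G'-S$ to compute $\dist_{G'-S}(s,t)$, accepting if and only if some $S$ with $|S|\le k$ yields distance at least $\ell$.

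For the running time, the kernel is computed in linear time, the enumeration produces $O(2^{6f})$ subsets, and each invocation of Dijkstra on the kernel takes $O(n\log n + m)$ time in the worst case (one could even bound this by $O(f \log f)$ on the kernel itself, but stating it in terms of $n$ and $m$ matches the corollary as phrased and is harmless). Multiplying the number of branches by the per-branch cost gives the claimed $O(2^{6f}(n\log n + m))$ bound.

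I do not anticipate a genuine obstacle here, as this is a direct composition of two previously established results; the only point requiring a moment of care is ensuring that the running time of the shortest-path checks is stated against the original parameters $n$ and $m$ rather than the kernel size, which is why the bound carries the factor $(n\log n + m)$ outside the exponential term. One should also note that by the standing assumption that $k$ is smaller than the size of a minimum $st$-cut, the branching is never vacuous, but this does not affect the worst-case running-time estimate.
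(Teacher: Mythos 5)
Your proposal is correct and follows exactly the paper's intended argument: apply the linear kernel of \Cref{thm:fes-linear-kernel} and then brute-force over all $O(2^{6f})$ edge subsets of the kernel, checking each with Dijkstra's algorithm. No meaningful difference from the paper's own (one-sentence) justification.
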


\paragraph{Parameter cluster vertex deletion number.}
We now prove that SP-MVE restricted to unit-length edges is fixed-parameter tractable with respect to the parameter cluster vertex deletion number~$x$.
A graph $G$ is a \emph{cluster graph} if it is a disjoint union of cliques.
A vertex set~$X \subseteq V$ is called \emph{cluster vertex deletion set} if~$G[V \setminus X]$ is a cluster graph~\cite{HKMN10}.
The cluster vertex deletion number is the size of a minimum cluster vertex deletion set.

Recall that \SPMVE with arbitrary edge lengths is NP-complete on complete graphs (see \cref{thm:cliques-np-hard}).
Thus, the algorithm presented below for the unit-length case cannot be extended to the more general case with arbitrary edge lengths since a clique has cluster vertex deletion number zero.

We assume in the following that for the input instance~$(G = (V,E),k,\ell,s,t)$ we are given a cluster vertex deletion set~$X$ of size~$x$.
If~$X$ is not already given, then we can compute~$X$ in~$O(1.92^x \cdot (n+m))$ time~\cite{BCKP16}.
Our algorithm is based on the observation that twins can be handled equally in a solution.
This follows from a more general statement provided in the following lemma.
It shows that for any set~$T \subseteq V \setminus \{s,t\}$ of vertices that have the same neighborhood in~$V \setminus T$, we can assume that we do not delete edges in~$G[T]$ and that the vertices in~$T$ behave the same, that is, one deletes either all edges or no edge between a vertex~$v \in V \setminus T$ and the vertices in~$T$.

\begin{lemma}\label{lem:twinUnitLentghs}
	Let $G = (V,E)$ be an undirected graph with unit-length edges, let $s,t \in V$ be two vertices, and let~$T = \{v_1,\ldots,v_t\} \subseteq V \setminus \{s,t\}$ be a set of vertices such that~$N_G(v_1) \setminus T = N_G(v_2) \setminus T = \ldots = N_G(v_t) \setminus T$.
	Then, for every edge subset~$S \subseteq E$, there exists an edge subset~$S' \subseteq E$ such that~$\dist_{G-S'}(s,t) \ge \dist_{G-S}(s,t)$, $|S'| \le |S|$, and~$N_{G[S']}(v_1) = N_{G[S']}(v_2) = \ldots = N_{G[S']}(v_t)$.
\end{lemma}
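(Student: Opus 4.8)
The plan is to exhibit a single canonical deletion set~$S'$ obtained from~$S$ by concentrating all deletions incident to~$T$ onto one \emph{representative} vertex. Write $N := N_G(v_1)\setminus T$ for the common external neighborhood, and observe that by the twin hypothesis every vertex of~$N$ is adjacent to all of~$T$. First I would partition~$S$ into three parts: the edges~$R$ not incident to~$T$, the edges~$D$ lying inside~$G[T]$, and, for each~$i$, the set $S_i := \{v\in N : \{v,v_i\}\in S\}$ of deleted external edges at~$v_i$. Then I would pick an index~$\sigma$ minimizing~$|S_i|$, set $C := S_\sigma$, and define
\[
  S' := R \cup \{\{v,v_i\} : v\in C,\ i\in\{1,\dots,t\}\}.
\]
By construction $S'$ deletes no edge of~$G[T]$ and, for every~$i$, $N_{G[S']}(v_i)=C$, which gives the required uniformity $N_{G[S']}(v_1)=\dots=N_{G[S']}(v_t)$. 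The size bound is then immediate: $|S'| = |R| + t\,|C| = |R| + t\min_i|S_i| \le |R| + \sum_i |S_i| \le |S|$.

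The real work is the distance inequality $\dist_{G-S'}(s,t)\ge \dist_{G-S}(s,t)$, and the key structural fact I would establish first is that \emph{every shortest $st$-path in~$G-S'$ uses at most one vertex of~$T$}. Since the pattern of~$S'$ is uniform, in~$G-S'$ each external vertex is adjacent either to all of~$T$ (if it lies in~$N\setminus C$) or to none of~$T$ (if it lies in~$C$). Hence if a shortest path entered~$T$ twice, with first $T$-vertex~$v_i$ preceded by~$a$ and last $T$-vertex~$v_j$ (with $j\ne i$) succeeded by~$b$, then $a,b\in N\setminus C$ would both be adjacent to~$v_i$, so replacing the subpath $a$-$v_i$-$\cdots$-$v_j$-$b$ (length~$\ge 3$) by $a$-$v_i$-$b$ (length~$2$) would yield a strictly shorter $st$-path, a contradiction. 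I expect this to be the main obstacle, since it is exactly what rules out the combinatorially awkward scenario of a path weaving through several twins, which would otherwise force a system-of-distinct-representatives argument when re-routing.

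With that observation in hand, the distance inequality follows by converting a shortest $st$-path~$P'$ of~$G-S'$ into an equally long $st$-path of~$G-S$. Every edge of~$P'$ not incident to~$T$ lies outside~$S'$, hence (being part of the common set~$R$) also outside~$S$, so it survives in~$G-S$. If~$P'$ avoids~$T$ entirely it is already a path in~$G-S$. Otherwise~$P'$ crosses~$T$ through a single vertex~$v_i$ as $a$-$v_i$-$b$ with $a,b\in N\setminus C = N\setminus S_\sigma$, and I would reroute this crossing through the representative~$v_\sigma$, replacing it by $a$-$v_\sigma$-$b$. Both edges $\{a,v_\sigma\}$ and $\{b,v_\sigma\}$ survive in~$G-S$ precisely because $a,b\notin S_\sigma$, and $v_\sigma$ does not occur on the remainder of~$P'$ (whose only $T$-vertex is~$v_i$), so the outcome is a simple $st$-path in~$G-S$ of the same length as~$P'$. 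Taking~$P'$ to be shortest then gives $\dist_{G-S}(s,t)\le \dist_{G-S'}(s,t)$, which completes the argument.
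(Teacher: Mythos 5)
Your proof is correct and takes essentially the same route as the paper's: pick a representative vertex of~$T$ with the sparsest deletion pattern, replicate that pattern on every vertex of~$T$ while deleting nothing inside~$G[T]$, and convert any short $st$-path in~$G-S'$ into an equally short path in~$G-S$ by rerouting its $T$-crossing through the representative. The organizational difference in the distance bound is cosmetic: the paper argues by contradiction, shortcutting between the first and last $T$-vertices of a hypothetical short path and then substituting the representative, whereas you first prove that a shortest path in~$G-S'$ meets~$T$ at most once and then substitute---same substance. There is, however, one substantive difference worth noting: you choose the representative minimizing the number of deleted edges towards~$V\setminus T$, while the paper chooses the vertex~$u\in T$ incident to the fewest edges of~$S$ \emph{in total}, including edges inside~$G[T]$. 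Your criterion is in fact the correct one: with the paper's choice the bound~$|S'|\le|S|$ can fail, e.g., for $T=\{v_1,\dots,v_5\}$ where $S$ contains all six edges of a clique on~$\{v_2,\dots,v_5\}$, three external edges at~$v_1$, and one external edge at each of~$v_2,\dots,v_5$; then $u=v_1$ has minimum total $S$-degree, and replicating its pattern creates $5\cdot 3=15$ edges in place of the $13$ removed. So your proposal, besides being correct, quietly repairs this small slip in the paper's argument.
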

{\begin{proof}
	Starting from the edge subset~$S \subseteq E$, we construct~$S'$ having the desired properties.
	To this end, we abbreviate~$\ell := \dist_{G-S}(s,t)$.
	Let~$T \in V \setminus \{s,t\}$ be a set of vertices such that~$N(u) \setminus T = N(v) \setminus T$ for each pair~$u,v \in T$.
	Assume that the vertices in~$T$ do not have the same neighborhood in~$G[S]$; otherwise, we simply set~$S' := S$.
	Let~$u \in T$ be a vertex that has in the graph~$(V,S)$ the smallest degree of all vertices in~$T$, that is, the vertex in~$T$ that is incident to the least number of edges in~$S$.
	Now, construct~$S'$ as follows.
	First, initialize~$S'$ as a copy of~$S$.
	Second, remove all edges of~$S'$ that have both endpoints in~$T$.
	Third, for each~$v \in T \setminus\{u\}$ remove all edges incident to~$v$ from~$S$ and add for each edge~$\{u,w\} \in S$ the edge~$\{v,w\}$.
	Summarizing, $S'$ is composed as follows:
	\begin{align*}
		S' := {} & (S \setminus \{\{v,w\} \mid v \in T\setminus\{u\} \wedge w \in V \})\ \cup \\ & \{\{v,w\} \mid v \in T \setminus \{u\} \wedge w \in V \setminus T \wedge \{u,w\} \in S\}.
	\end{align*}
	By construction of~$S'$ we have~$|S'| \le |S|$.
	Furthermore, we have~$N_{G[S']}(v) = N_{G[S]}(u) \setminus T$ for all~$v \in T$ and thus $N_{G[S']}(v) = N_{G[S']}(v')$ for each pair~$v,v' \in T$.
	It remains to show that in~$G - S'$ the distance between~$s$ and~$t$ is at least~$\ell$.
	To this end, assume by contradiction that $G-S'$ contains an $st$-path~$P$ of length less than~$\ell$.
	Since, by construction of~$S'$, each edge in~$S \setminus S'$ has at least one endpoint in~$T$, it follows that~$P$ contains at least one vertex of~$T$.
	Let~$v$ and $v'$ be the first respectively last vertex of~$T$ on~$P$ (possibly~$v = v'$) and let~$w,w'$ be the vertices before~$v$ respectively after~$v'$ on~$P$, that is $$P = s\text{-}\ldots\text{-}w\text{-}v\text{-}\ldots\text{-}v'\text{-}w'\text{-}\ldots\text{-}t.$$
	Since~$w,w' \notin T$, $N_G(v) \setminus T = N_G(v') \setminus T$, and~$N_{G[S']}(v) = N_{G[S']}(v')$, it follows that~$Pw$-$v$-$w'P$ is also an $st$-path with length less than~$\ell$ in~$G-S'$.
	Similarly, it follows that~$P' := Pw$-$u$-$w'P$ is also an $st$-path with length less than~$\ell$ in~$G-S'$ (where~$u$ is the vertex used in the construction of~$S'$).
	Since~$N_{G[S']}(u) = N_{G[S]}(u) \setminus T$ it follows that~$\{u,w\},\{u,w'\} \notin S$, implying that~$P'$ is an~$st$-path of length less than~$\ell$ in~$G-S$; a contradiction to the assumption that~$\dist_{G-S}(s,t)=\ell$.
\end{proof}}

Using \cref{lem:twinUnitLentghs} we can show that \SPMVE with unit-length edges is linear-time fixed-parameter tractable with respect to the parameter cluster vertex deletion number.

\begin{theorem}\label{thm:fptclusterdel}
	\SPMVE with unit-length edges can be solved in~$2^{2^{O(x)}}(n+m)$ time where~$x$ is the cluster vertex deletion number.
\end{theorem}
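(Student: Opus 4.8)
The plan is to exploit the cluster vertex deletion set~$X$ together with \cref{lem:twinUnitLentghs} to reduce the seemingly unbounded edge-deletion problem to the choice of a \emph{distance labeling} on a bounded ``interface''. First I would compute~$X$ (of size~$x$) and the partition of~$G[V\setminus X]$ into cliques. To each vertex~$v\notin X$ I associate its \emph{type}~$N_G(v)\cap X\subseteq X$; there are at most~$2^x$ types, and within any single clique all vertices of the same type are true twins. Invoking \cref{lem:twinUnitLentghs} with~$T$ ranging over these type-groups, I may restrict attention to solutions that treat twins uniformly: no edge inside a type-group is deleted, and for any block of edges between a type-group and an outside vertex the solution deletes either all or none of them. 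Thus a solution is described block-wise, where a block between two type-groups~$T,T'$ of one clique consists of~$|T|\cdot|T'|$ edges and a block between a type-group~$T$ and a \emph{portal} vertex~$w\in X\cup\{s,t\}$ consists of~$|T|$ edges.

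Next I would translate the distance requirement into a labeling condition. For unit-length edges, $\dist_{G-S}(s,t)\ge\ell$ holds if and only if there is a level function~$\phi\colon V\to\{0,1,\dots,\ell\}$ with~$\phi(s)=0$, $\phi(t)=\ell$, and~$|\phi(a)-\phi(b)|\le 1$ on every surviving edge; the minimum number of deletions equals the minimum, over such~$\phi$, of the number of edges whose endpoints differ by at least two levels (delete exactly those, keep the rest). By the twin-uniform normalization an optimal~$\phi$ may be taken constant on every type-group, so~$\phi$ is fully described by the levels of the at most~$x+2$ portal vertices~$X\cup\{s,t\}$ together with one level per type-group. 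The crucial observation is that, once the portal levels are fixed, the cost contributed by a clique — its intra-clique between-group blocks together with its blocks to the portals — depends only on the levels chosen for that clique's own groups; hence the cliques can be optimized independently and their costs summed.

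The algorithm then enumerates portal configurations and adds up per-clique optima. I would first argue that any \emph{finite} value of~$\dist_{G-S}(s,t)$ is~$2^{O(x)}$: since distinct cliques intersect only in portal vertices, in a BFS from~$s$ every maximal run of consecutive levels that contains no portal vertex lies inside a single clique and therefore spans at most~$2^x$ levels (at most one per type-group), and there are only~$O(x)$ such runs. Consequently, if the input~$\ell$ exceeds this bound the sole option is to disconnect~$s$ from~$t$, testable by one minimum $st$-cut computation; otherwise all levels may be capped at~$\ell=2^{O(x)}$. For each of the~$(\ell+1)^{x+2}=2^{O(x^2)}$ portal labelings I would, for every clique, brute-force the level assignment to its~$\le 2^x$ groups (at most~$\ell^{2^x}=2^{2^{O(x)}}$ possibilities), retaining the cheapest block-deletion cost consistent with the fixed portals; summing these per-clique minima together with the portal-to-portal edge costs and then minimizing over portal labelings gives the optimum of \meb, which is compared against~$k$. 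As the per-clique search is additive over cliques rather than multiplicative, the total running time is~$2^{O(x^2)}\cdot\bigl(2^{2^{O(x)}}+(n+m)\bigr)=2^{2^{O(x)}}(n+m)$.

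The main obstacle I anticipate is the budget accounting, not the distance logic. Because cutting a between-group block costs the \emph{product}~$|T|\cdot|T'|$ of the group sizes while cutting a group-to-portal block costs~$|T|$, the per-clique minimization must genuinely trade these weighted costs against the prescribed level pattern, and I must show that restricting to twin-uniform, group-constant~$\phi$ sacrifices no optimality — this is precisely the role of \cref{lem:twinUnitLentghs} and is where the argument has to be watertight. A secondary subtlety is pinning the enumeration down to~$2^{2^{O(x)}}$ rather than something larger: this rests on the~$2^{O(x)}$ bound on achievable finite distances (keeping the level range, and hence each clique's search space, controlled), which in turn relies on the fact that each clique decomposes into only~$\le 2^x$ distinct type-groups and that cliques meet exclusively in the~$\le x+2$ portals.
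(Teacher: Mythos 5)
Your proposal is correct, but it takes a genuinely different route from the paper's proof. The paper's algorithm works directly with edge-deletion sets: it branches on the deleted edges inside~$G[X]$, guesses for every pair of vertices of~$X$ the length (from~$\{2,\dots,2^x+1,\infty\}$) of a shortest connection through a single clique, then---using \cref{lem:twinUnitLentghs} exactly as you do, to cap each clique at~$2^x$ pairwise non-twin vertices---computes for every clique containing neither~$s$ nor~$t$, independently, the minimum number of deletions realizing the guessed distance profile, discards those cliques, and finally brute-forces the residual instance consisting of~$X$ and the at most two cliques containing~$s$ and~$t$. You instead recast the problem via a potential/level-function characterization: $\dist_{G-S}(s,t)\ge\ell$ iff some $\phi\colon V\to\{0,\dots,\ell\}$ with~$\phi(s)=0$, $\phi(t)=\ell$ has every surviving edge differing by at most one level, with the optimum equal to the minimum number of $\phi$-violated edges. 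This layered-cut view, combined with the twin lemma (to make~$\phi$ constant on type-groups) and your $2^{O(x)}$ bound on achievable finite distances, reduces everything to portal-label enumeration plus independent per-clique minimization. It is arguably cleaner than the paper's ``guess the metric, then realize it'' scheme, it avoids the paper's separate treatment of the cliques containing~$s$ and~$t$, and it directly yields the optimum (the \textsc{Min-Cost} variant) rather than only the decision answer. Both arguments rest on the same two pillars: \cref{lem:twinUnitLentghs}, applied iteratively to all type-groups (which is legitimate, since uniformizing one group preserves the uniformity of previously processed groups), and the fact that distinct cliques communicate only through~$X$.

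Two minor repairs are needed. First, when~$\ell$ exceeds your $2^{O(x)}$ distance bound you should not invoke a minimum $st$-cut computation, which is not linear-time; the paper's standing assumption that~$k$ is smaller than every $st$-edge-cut lets you answer ``no'' outright, preserving the $(n+m)$ factor. Second, your running-time expression $2^{O(x^2)}\cdot(2^{2^{O(x)}}+(n+m))$ miscounts slightly: the per-clique search must be repeated for each of the up to~$n$ cliques and for each portal labeling, giving $2^{O(x^2)}\cdot n\cdot 2^{2^{O(x)}}$ plus linear preprocessing---which is still $2^{2^{O(x)}}(n+m)$, so the claimed bound stands.
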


\begin{proof}
	Let~$(G = (V,E),k,\ell,s,t)$ be the input instance of SP-MVE and let~$X \subseteq V$ be a cluster vertex deletion set of size~$x$.
	Hence, $G-X$ is a cluster graph and the vertex sets $C_1, \ldots, C_r$ form the cliques (clusters) for some~$r \in \N$.
	We set~$\mathcal{C} := \{C_1, \ldots, C_r\}$.
	Assume that there is an SP-MVE solution~$S \subseteq E$ of size at most~$k$; otherwise the algorithm will output `no' as it finds no solution.
	We describe an algorithm that finds~$S$.

	Our algorithm is based on the following observation.
	Let~$P$ be an arbitrary shortest $st$-path that goes through a clique~$C \in \mathcal{C}$ in~$G-S$.
	Then, $P$ contains at most~$2^x$ vertices from~$C$:
	By \cref{lem:twinUnitLentghs}, we can assume that the twins in~$G$ are still twins in~$G-S$.
	Since~$P$ is a shortest path, $P$ does not contain two vertices that are twins.
	As the vertices in~$C$ form a clique, they only differ in how they are connected to vertices in~$X$.
	Thus, $C$ contains at most~$2^x$ ``different'' vertices, that is, vertices with pairwise different neighborhoods.
	
	Now, consider two non-adjacent vertices~$u,v \in X$.
	From the above considerations it follows that in~$G-S$ a $uv$-path avoiding the vertices in~$X$ has length between one and~$2^x+1$ as it can pass through at most one clique.
	Our algorithm tries for each vertex pair from~$X$ all possibilities for the distance it has in~$G-S$ and then tries to realize the current possibility.
	After the current possibility is realized, the cliques in~$\mathcal{C}$ are obsolete and thus the instance size can be upper-bounded in a function of~$x$.
	More precisely, our algorithm works as follows:
	\begin{enumerate}
		\item Branch into all possibilities to delete edges contained in~$G[X]$. Decrease the budget~$k$ accordingly.\label{step:edgesInCVDSet}
		\item Branch into all possibilities to add for each pair~$u,v$ of non-adjacent vertices in~$X$ an edge with a length lying in~$\{2,3,\ldots,2^x,2^x+1,\infty\}$ and indicating the length of a shortest path between~$u$ and~$v$ that does not contain any vertex in~$X$.\label{step:assignDistances}
		\item Delete for each clique containing neither~$s$ nor~$t$ the \emph{minimum number} of edges to ensure that a shortest path between each pair of vertices in~$X$ is completely contained in~$G[X]$. Decrease the budget~$k$ accordingly.\label{step:ensureDistances}
		\item Remove all cliques except the ones that contain $s$ or $t$. Do \emph{not} change the budget~$k$.\label{step:remove cliques}
		\item Solve the problem on the remaining graph with the remaining budget (that was not spent in \cref{step:edgesInCVDSet,step:ensureDistances}).\label{step:solveRemainingProblem}
	\end{enumerate}
	Note that \cref{step:assignDistances} is performed for each possibility in \cref{step:edgesInCVDSet}.
	Hence, in \cref{step:edgesInCVDSet,step:assignDistances} at most $2^{x^2} \cdot (2^x+1)^{x^2}$ possibilities are considered and for each of these possibilities \cref{step:ensureDistances} is invoked.

	In \cref{step:ensureDistances}, the algorithm tries to realize the prediction made in \cref{step:assignDistances}.
	To this end, let~$C \in \mathcal{C}$ be a clique containing neither~$s$ nor~$t$.
	The algorithm branches into all possibilities to delete edges in~$G[C]$ or edges with one endpoint in~$C$ and the other endpoint in~$X$.
	Since~$G[C]$ contains at most~$2^x$ different vertices, it follows from \cref{lem:twinUnitLentghs} that at most~$2^{(2^x)^2+2^x \cdot x} = 2^{(4^x)+2^x \cdot x}$ possibilities need to be considered to delete edges.
	For each possibility, the algorithm checks in $x^{O(1)}$ time whether all shortest paths between a pair of vertices of~$X$ go through~$C$.
	If yes, then the algorithm discards the currently considered branch; if no, then the current branch is called valid.
	From all valid branches for~$C$, the algorithm picks the one that deletes the minimum amount of edges and proceeds with the next clique.
	Observe that since~$X$ is a vertex separator for all cliques in~$\mathcal{C}$, the algorithm can solve \cref{step:ensureDistances} for each clique independently of the outcome in the other cliques.
	Hence, the overall running time for \cref{step:ensureDistances} is~$2^{2^{O(x)}} \cdot n$ as~$|\mathcal{C}| \le n$.
	
	As discussed above, the cliques in~$\mathcal{C}$ containing neither~$s$ nor~$t$ are now obsolete as there is always a shortest path avoiding these cliques.
	Hence, the algorithm removes these cliques (\cref{step:remove cliques}).
	This can be done in linear time.
	The remaining instance consists of the vertices in~$X$ and the at most two cliques containing~$s$ and~$t$.
	As the algorithm deleted the edges within~$G[X]$ in \cref{step:edgesInCVDSet}, it remains to consider deleting edges within the two cliques or between the two cliques and the vertices in~$X$.
	Again, by \cref{lem:twinUnitLentghs}, the algorithm only needs to branch into~$2^{2 \cdot (4^x + x \cdot 2^x)}$ possibilities to delete edges and check for each branch whether~$s$ and~$t$ have distance at least~$\ell$ and the overall budget~$k$ is not exceeded.
	If one branch succeeds, then the algorithm found a solution and returns it.
	If no branch succeeds, then there exists no solution of size~$k$ since the algorithm performed an exhaustive search.
	Overall, the running time is~$2^{2^{O(x)}} \cdot (n+m)$.
\end{proof}
Obviously, it would be interesting to improve the above algorithm 
by obtaining linear-time fixed-parameter tractability with a 
single-exponential-time algorithm.

\section{Conclusion}\label{sec_conclusion}
The \MVE (\SPMVE) problem is a natural edge deletion problem that 
is amenable to a rich body of fine-grained (multivariate) 
computational complexity 
analysis. Such a study has been initiated here, identifying numerous 
challenges for future work. 
\cref{fig:par-hier} in the introductory section depicts a wide range of graph parameters for which the parameterized complexity status of \SPMVE is unknown.
Also concerning the approximation point of view not much is known.
There is a huge gap between the known lower and upper bounds of the approximation factor achievable in polynomial time.
Further, from a practical point of view it would make sense to extend our studies by restricting the input to planar graphs~\cite{PS16,FHNN18}---here
one might hope for further fixed-parameter tractability results.
Moreover, the complexity of \SPMVE remains open even for highly structured graphs such as interval or proper interval graphs; we conjecture that \SPMVE is polynomial-time solvable on proper interval graphs~\cite{Stahlberg16}.
Finally, also in terms of parameterized approximability~\cite{Marx08} \MVE offers a number of interesting challenges for future work.
	\bibliographystyle{abbrvnat}
	\bibliography{mve_arxiv}

\end{document}